%%%%%%%%%%%%%%%%%%%%%%%%%%%%%%%%%%%%%%%%%%%%%%%%%%%%%%%%%%%%%%%%%%%%%%%%%%%%%%%%
%2345678901234567890123456789012345678901234567890123456789012345678901234567890
%        1         2         3         4         5         6         7         8

\documentclass[letterpaper, 10 pt, conference]{ieeeconf}  % Comment this line out if you need a4paper

\IEEEoverridecommandlockouts                              % This command is only needed if 
                                                          % you want to use the \thanks command

\overrideIEEEmargins                                      % Needed to meet printer requirements.

%In case you encounter the following error:
%Error 1010 The PDF file may be corrupt (unable to open PDF file) OR
%Error 1000 An error occurred while parsing a contents stream. Unable to analyze the PDF file.
%This is a known problem with pdfLaTeX conversion filter. The file cannot be opened with acrobat reader
%Please use one of the alternatives below to circumvent this error by uncommenting one or the other
%\pdfobjcompresslevel=0
%\pdfminorversion=4

% See the \addtolength command later in the file to balance the column lengths
% on the last page of the document

% The following packages can be found on http:\\www.ctan.org
%\usepackage{graphics} % for pdf, bitmapped graphics files
%\usepackage{epsfig} % for postscript graphics files
%\usepackage{mathptmx} % assumes new font selection scheme installed
%\usepackage{times} % assumes new font selection scheme installed
%\usepackage{amsmath} % assumes amsmath package installed
%\usepackage{amssymb}  % assumes amsmath package installed

\usepackage{cite}
\usepackage{amsmath,amssymb,amsfonts}
\usepackage{algorithmic}
\usepackage{graphicx}
\usepackage{subcaption}
\usepackage{pgfplots}
\usepackage{tikz}
\pgfplotsset{compat=newest} 

\newtheorem{problem}{Problem}

\newtheorem{prop}{Proposition}
\newtheorem{Oproblem}{Optimization Problem}

\title{\LARGE \bf Multi-Stage Sparse Resource Allocation for Control of Spreading Processes over Networks}

\author{Vera L. J. Somers and Ian R. Manchester% <-this % stops a space
%\thanks{*This work was not supported by any organization}% <-this % stops a space
\thanks{The authors are with the Australian Centre for Field Robotics (ACFR), School of Aerospace, Mechanical and Mechatronic Engineering,
        University of Sydney, NSW 2006, Australia
        {\tt\small \{v.somers, i.manchester\}@acfr.usyd.edu.au}}%
%\thanks{$^{2}$Bernard D. Researcheris with the Department of Electrical Engineering, Wright State University,
%        Dayton, OH 45435, USA
 %       {\tt\small b.d.researcher@ieee.org}}%
}

\begin{document}

\maketitle
\thispagestyle{empty}
\pagestyle{empty}

%%%%%%%%%%%%%%%%%%%%%%%%%%%%%%%%%%%%%%%%%%%%%%%%%%%%%%%%%%%%%%%%%%%%%%%%%%%%%%%%
\begin{abstract}
In this paper we propose a method for sparse dynamic allocation of resources to bound the risk of spreading processes, such as epidemics and wildfires, using convex optimization and dynamic programming techniques. Here, risk is defined as the risk of an undetected outbreak, i.e. the product of the probability of an outbreak occurring over a time interval and the future impact of that outbreak, and we can allocate budgeted resources each time step to bound and minimize the risk. Our method in particular provides sparsity of resources, which is important due to the large network structures involved with spreading processes and has advantages when resources can not be distributed widely. 
\end{abstract}

%We demonstrate with different spreading process examples how the proposed method leads to sparse dynamic resource allocations over time.  

%%%%%%%%%%%%%%%%%%%%%%%%%%%%%%%%%%%%%%%%%%%%%%%%%%%%%%%%%%%%%%%%%%%%%%%%%%%%%%%%
\section{INTRODUCTION}
%paragraph introducing topic 
Epidemics, computer viruses and bushfires can all be thought of as spreading processes in which an initial localized outbreak spreads rapidly throughout a network \cite{Karafyllidis1997a,Bloem2008,Nowzari2016}. The real-world risks associated with these events have sparked significant research into methods for modeling, prediction and control. Furthermore it has stressed the importance and current limitations of methods to reduce their impact by planning appropriate intervention strategies.

%gap
The large scale these spreading processes typically evolve on, e.g. global travel networks for epidemics, large geographic areas for wildfires, or the internet for computer viruses, imposes a challenge and emphasizes the importance of scalability of computational methods. Furthermore, it can be difficult or impossible to distribute resources broadly across the network, such as vaccines for epidemics or waterbombing allocations for wildfires. Therefore, sparsity of resource allocation solutions is often required and essential.

%Spreading processes typically evolve over very large networks, e.g. global travel networks for epidemics, large geographic areas for wildfires, or the internet for computer viruses. Therefore both scalability of computational methods and sparsity of resource allocation is essential, because it can be 
%
%Furthermore, sparsity of resource allocation solutions is often needed, because it can be difficult or impossible to distribute resources broadly across the network. 
%The real-world risks associated with such events have stressed the importance and current limitations of methods to both quickly map and monitor outbreaks and to reduce their impact by planning appropriate intervention strategies. 

%how modeled
Spreading processes are commonly modeled as Markov processes. The most well-known models are the Susceptible-Infected-Susceptible (SIS) model and the Susceptible-Infected-Removed (SIR) model \cite{kermark1927contributions,bailey1975mathematical}. These stochastic models can be approximated as ordinary differential equation (ODE) models, which can in turn be approximated by linear models \cite{Ahn2013,VanMieghem:2009,Nowzari2016} which is proven by \cite{VanMieghem:2009} to be an upper bound and is therefore usually the object of study.  

The problems of minimizing the spreading rate by removing either a fixed number of links or nodes in the network are both NP-hard \cite{van2011decreasing}. This fact has motivated the study of heuristics methods based on node-rankings of various forms \cite{Hadjichrysanthou2015,Dhingra2018,Lindmark}. However, in general these approaches will not be optimal in any sense and furthermore the assumption of complete link or node removal is often unrealistic.

A more realistic assumption is that spreading rate can be decreased or the recovery rate  increased by applying resources to the nodes and links. Various methods have been proposed in which this resource allocation is subject to budget constraints (e.g. \cite{Nowzari2016,Giamberardino2017,Bloem2008,Khanafer,DiGiamberardino2019,Mai2018,Liu2019b,Dangerfield2019,Torres2017,Preciado2014,Zhang2018,Han2015,Nowzari2017}). 

%dynamic allocation background
However, all these methods offer a static control or static intervention solution only. That is, the resource allocation is determined once based on the current state instead of allocated over time. A more realistic assumption would be to have dynamic control and hence, time dependent intervention. 

Dynamic resource allocation for epidemics is studied in \cite{drakopoulos2017network,scaman2016suppressing}. Here network properties are identified to respectively identify limitations on and derive upper and lower bounds for epidemic extinction time. More recently, dynamic programming \cite{bertsekas2000dynamic} and in particular, model predictive control (MPC) has been proposed for time dependent intervention \cite{kohler2018dynamic,watkins2019robust,zino2021analysis}. 
%For spreading models, MPC has been proposed for dynamic resource allocation for both deterministic epidemic models \cite{kohler2018dynamic} and stochastic models \cite{watkins2019robust}. 

K{\"o}hler et al. extend their work \cite{kohler2018dynamic} in \cite{kohler2020robust} by investigating MPC approaches to optimally control the COVID-19 outbreak in Germany. Here `resource allocation' is considered social distancing measures and the goal is to minimize the number of fatalities. In \cite{Selley2015} nonlinear MPC is applied to an SIS model. The state matrix or `contact network' is altered to reduce the spread. Furthermore, critical control bounds are determined to investigate controllability. These approaches, however, do not provide sparse resource allocation. 
%kohler2018 does have node-dependent weightings/cost, kohler2020 doesn't

%contributions
In this paper we, therefore, present a multi-stage sparse resource allocation method for control of spreading processes over networks that can be formulated as an efficiently solvable problem. This paper builds on work presented in \cite{LCSS2021}, where we developed sparse resource models and a node-dependent risk model for static-intervention. The presented method distinguishes itself by extending this work to a dynamic sparse multi-step approach. In particular \cite{LCSS2021} has the limitation that it takes the assumption that all resources are applied at once. In this paper we consider the more realistic scenario that resources are allocated over time, with budgets per time unit.  
%In particular, the contributions of this paper are: 1) we propose a sparse dynamic resource allocation framework,  2) a risk model based on the product of the probability of an outbreak with its discounted future cost, with node-dependent weightings
%twofold?
% firstly, a risk model based on the product of the likelihood of an outbreak with its discounted future cost, with node-dependent weightings. Secondly, we propose a resource model which leads to sparse resource allocation. We show via a simplified wildfire example that our approach can lead to more precisely targeted allocation of resources.

%outline optional
The remainder of this paper is organized as follows. In Section II, we provide the problem statement. Next, in Section III, we present a relaxed problem via linearization, dynamic programming, and sparsity-inducing resource models. We demonstrate how this can be reformulated as a convex problem in Section IV, and finally, in Section V we illustrate the use of our proposed method with both epidemic and wildfire examples.

\section{PROBLEM STATEMENT}

\subsection{Spreading Process Model}
We study a spreading process on a graph $\mathcal{G}(\mathcal{V},\mathcal{E})$ with $n$ nodes in node set $\mathcal{V}$ and edge set $\mathcal{E}$, where each node $i \in \{1, 2, ..., n\} \in \mathcal{V}$ has a state $X_i(t)$ associated with it. The basic SIS model \cite{kermark1927contributions} is a cellular automata (CA) model in which a node can be in two states: infected, i.e. $X_i(t)=1$, or susceptible to infection from neighboring nodes, i.e. $X_i(t)=0$. An infected node recovers with probability $\delta_{i}\Delta t$ to $X_i(t+\Delta t)=0$ and the process spreads from infected node $j$ to susceptible node $i$ with probability $\beta_{ij} \Delta t$, where $\Delta t$ is a small time interval.

We now define $x_{i}(t)=E(X_{i}(t))=P(X_{i}(t)=1)$ as the probability of a node $i$ being infected at time $t$. Using a mean-field approximation \cite{Nowzari2016,Preciado2014} one obtains $n$ coupled nonlinear differential equations:

\begin{equation}
\label{eq:nonL}
\dot{x}_{i}=(1-x_{i}(t))\sum^{n}_{j=1}\beta_{ij}x_{j}(t)-\delta_{i}x_{i}(t).
\end{equation}
%By linearizing \eqref{eq:nonL} around the infection-free equilibrium point ($x_i=0 \,\, \forall\, i$) \cite{Preciado2014} we obtain
By applying Euler's forward approximation to \eqref{eq:nonL}, we obtain the following discrete time model approximation \cite{pare2020modeling}
\begin{equation}
\label{eq:disnonL}
x_{i}^{k+1}=x_i^k + h(1-x_{i}^k)\sum^{n}_{j=1}\beta_{ij}x_{j}^k-h\delta_{i}x_{i}^k
\end{equation}
where $h= t^{k+1}-t^k$ is the length of each time interval $[t^k, t^{k+1}]$. As long as $h\delta \leq 1$ and $h\sum^{n}_{j=1}\beta_{ij}<1$, this model is well-defined at $x_i^1\in[0,1] \Rightarrow x_i^k\in[0,1]$ for all $k$.

%The last is the simplest forward-Euler approximation of the dynamics. (Note could use higher-degree Taylor-series expansion)
 
%something positive system?
 %Options to obtain discrete model (pare), apply Euler's method to \eqref{eq:nonL}.   (note that there are multiple ways to derive this, see pare2020)
 
\subsection{Cost Function}
We associate the following cost function with the system  
\begin{equation}
\label{eq:JC}
  J(x^1) = \sum_{k=1}^\infty \alpha^kCx^k
\end{equation}
where $x^k=[x_{1}^k,...,x_{n}^k]^{T}$ is the state of the system \eqref{eq:disnonL} and $C= [c_1, ..., c_n]$ is a row vector defining the cost at time $k$ associated with each node $i$, with each $c_i\ge 0$. The cost is discounted by $\alpha\in(0,1]$ which can be tuned to emphasize near-term cost over long-term cost. We will also consider the \textit{risk}, or expected cost, when $x^1$ is a random variable.

\subsection{Intervention Model}
At each time step $k$ we can allocate resources to reduce the spreading rate  $\beta_{ij}$ on particular edges or increase recovery rates $\delta_{i}$ at particular nodes, in an effort to reduce the overall cost. In our model, changes made at a particular time step persist for all future times, but it is straightforward to modify our framework so that interventions have a fixed duration or diminish over time.

We assume bounded ranges of possible spreading and recovery rates:
\begin{equation}
 \label{eq:bounds2}
0 < \underline{\beta}_{ij}^k \leq \beta_{ij}^k \leq \overline{\beta}_{ij}^k, \quad 0 < \underline{\delta}_{i}^k \leq \delta_{i}^k \leq \overline{\delta}_{i}^k<\overline{\Delta}
\end{equation}
for $k=1,...,K$.
Here $\overline{\beta}_{ij}^1$ and $ \underline{\delta}_{i}^1 $ are the unmodified rates of the system \eqref{eq:disnonL}.  For $k>1$,  $\overline{\beta}_{ij}^k$ and $ \underline{\delta}_{i}^k $ are the updated rates of the system after the previous time step, i.e. $\overline{\beta}_{ij}^k=\beta_{ij}^{k-1}$ and $ \underline{\delta}_{i}^k=\delta_{i}^{k-1}$.

\subsection{Problem Statement}
\label{subsec:prob}
The goal is to keep both the risk of an outbreak of a spreading process and the allocation of resources on $\beta_{ij}$ and $\delta_i$, both per time step and overall, small. Therefore we study two closely related problems of multi-stage sparse resource allocation for spreading processes.

\begin{problem}[Resource-Constrained Risk Minimization]
Minimize the cost \eqref{eq:JC} of an outbreak of a spreading process \eqref{eq:disnonL} by changing a bounded number of elements $\beta_{ij}$ and $\delta_i$ at each time step.  
\end{problem}

\begin{problem}[Risk-Constrained Resource Minimization]
Minimize the number of non-zero resource allocations on $\beta_{ij}$ and $\delta_i$ such that a given bound on the cost \eqref{eq:JC} is achieved. 
\end{problem}

These problems are combinatorial in nature and likely to be intractable: closely related problems are NP-hard \cite{van2011decreasing}. In the remainder of this paper we introduce a tractable approximation based on a relaxed dynamic programming formulation and a sparsity-inducing resource model

\section{Dynamic Programming Relaxation}
In order to address the problems defined in the previous Section \ref{subsec:prob}, we make a number of bounding approximations to the cost of an outbreak and introduce a sparsity-inducing resource model.

\subsection{Linearized Model}
By linearizing around the infection-free equilibrium point ($x_i=0 \,\, \forall\, i$) we obtain
\begin{equation}
\label{eq:disL}
x^{k+1}=A^kx^k
\end{equation}
where $x^k=[x_{1}^k,...,x_{n}^k]^{T}$ is the state of the system and the sparse state matrix $A$ consisting of elements
 \begin{equation}
  \label{eq:epi}
 a_{ij}^k = 
\begin{cases}
1-h\delta_{i}^k  \ge 0 &\quad \text{if}\quad  i=j, \\
h\beta_{ij}^k \ge 0&\quad \text{if}\quad  i\neq j, (i,j) \in \mathcal{E}, \\
0  &\quad \text{otherwise}.
 \end {cases}
 \end{equation}
  We note that \eqref{eq:disL} is a positive system \cite{berman1994nonnegative}, i.e. if $x^1_i\ge 0$ for all $i$, then $x^i_k\ge 0$ for all $i$ and all $k\ge 0$.
 
\subsection{Cost Bounds}
We now formulate bounds on the cost \eqref{eq:JC} evaluated on solutions to the linear model \eqref{eq:disL}, which furthermore bound the costs of the nonlinear model \eqref{eq:disnonL}. It is well known (e.g. \cite{berman1994nonnegative, briat2013robust, rantzer2015scalable}) that positive systems with non-negative linear costs admit linear value (cost-to-go) functions. We will use a variation of this argument to construct bounds via a dynamic-programming-like formulation.

\begin{prop}\label{prop1}
	Suppose there exists a sequence of row vectors $p^k=[p_{1}^k,...,p_{n}^k], k = 1, ..., K$ which are elementwise non-negative: $p_i^k\ge 0$ for all $i,k$, and for which the following inequalities hold:
\begin{equation}
\label{eq:pk1}
  p^k \ge C+\alpha p^{k+1}A^{k}
\end{equation}
for $k=1,...,K-1$ and
\begin{equation}
\label{eq:pK1}
  p^K \ge C+\alpha p^K A^{K},
\end{equation}
where the inequalities are understood to hold elementwise. Then $p^1x^1\ge J(x^1)$ as defined in \eqref{eq:JC}.
\end{prop}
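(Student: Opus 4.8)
The plan is to prove this by a dynamic-programming-style telescoping argument that exploits positivity of the linear model \eqref{eq:disL}. First I would extend the finite family $p^1,\dots,p^K$ to an infinite one by setting $p^k := p^K$ for every $k>K$. Since interventions act only for $k=1,\dots,K$ and the rates are afterwards frozen at their terminal values, so that $A^k=A^K$ for all $k\ge K$, inequality \eqref{eq:pK1} says exactly that the one-step recursion $p^k \ge C + \alpha\,p^{k+1}A^k$ continues to hold for every $k\ge K$, while \eqref{eq:pk1} supplies it for $k=1,\dots,K-1$. Hence, as componentwise inequalities between row vectors, $p^k \ge C + \alpha\,p^{k+1}A^k$ holds for all $k\ge 1$, with $p^k\ge 0$ throughout.

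Next I would right-multiply each of these inequalities by the state $x^k$. Because \eqref{eq:disL} is a positive system and $x^1\ge 0$, we have $x^k\ge 0$ for all $k$ (the remark following \eqref{eq:epi}); multiplying a componentwise row-vector inequality on the right by the nonnegative column vector $x^k$ therefore preserves it, and using $A^k x^k = x^{k+1}$ yields the scalar inequalities $p^k x^k \ge C x^k + \alpha\, p^{k+1}x^{k+1}$ for every $k\ge 1$. A short induction on $N$, substituting the $(k{=}N{+}1)$ inequality (scaled by $\alpha^{N}\ge 0$) into the $N$-th estimate, telescopes these into
\[
 p^1 x^1 \;\ge\; \sum_{k=1}^{N} \alpha^{k-1}\, C x^k \;+\; \alpha^{N}\, p^{N+1} x^{N+1}, \qquad N\ge 1 .
\]

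Finally, since $p^{N+1}\ge 0$, $x^{N+1}\ge 0$ and $\alpha>0$, the remainder $\alpha^{N} p^{N+1} x^{N+1}$ is nonnegative and can simply be dropped, leaving $p^1 x^1 \ge \sum_{k=1}^{N}\alpha^{k-1}\, C x^k$ for every $N$. The right-hand partial sums are nondecreasing in $N$ (as $C\ge 0$ and $x^k\ge 0$), so letting $N\to\infty$ gives $p^1 x^1 \ge \sum_{k=1}^{\infty}\alpha^{k-1}\, C x^k \ge \sum_{k=1}^{\infty}\alpha^{k}\, C x^k = J(x^1)$, where the last inequality holds termwise since $\alpha\in(0,1]$ and every term is nonnegative; in particular the finite quantity $p^1x^1$ bounds, and hence forces the finiteness of, $J(x^1)$. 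I do not anticipate a genuine obstacle here. The only points needing care are (i) interpreting $A^k$ past the terminal stage so the recursion extends to all $k$, and (ii) the passage to the limit, which is painless precisely because nonnegativity lets us discard the tail rather than prove it vanishes, a distinction that matters in the undiscounted case $\alpha=1$ where $\alpha^{N}$ does not decay.
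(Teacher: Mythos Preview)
Your telescoping argument is essentially the paper's, and the linear part is correct, including the careful extension $p^k:=p^K$, $A^k:=A^K$ for $k\ge K$ and the observation that the nonnegative tail can simply be dropped rather than shown to vanish. One minor point: your telescoped bound $p^1x^1\ge\sum_{k=1}^\infty\alpha^{k-1}Cx^k$ is actually sharper than what is needed, and you correctly absorb the index shift via $\alpha\in(0,1]$; the paper's proof is slightly less explicit on this.

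There is, however, one genuine omission. You work throughout with the linear dynamics $x^{k+1}=A^kx^k$, but $J(x^1)$ in \eqref{eq:JC} is defined along trajectories of the \emph{nonlinear} discrete model \eqref{eq:disnonL}. Your argument therefore proves $p^1x^1\ge\sum_k\alpha^kCx^k$ only for the linear trajectory. The paper closes this gap with a short comparison step: for $x^k\in[0,1]^n$ the nonlinear update \eqref{eq:disnonL} satisfies $x^{k+1}\le A^kx^k$ componentwise (because $(1-x_i^k)\le 1$), so by induction the linear trajectory from $x^1$ dominates the nonlinear one, and since $C\ge 0$ the linear cost dominates the nonlinear cost. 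Adding that one-paragraph comparison to your proof makes it complete as stated.
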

\begin{proof}
We first show that $V(x,k) = p^kx^k$ provides an upper bound on the cost-to-go from state $x$ at time $k$ for the linear system \eqref{eq:disL}. Since $x_i^k\ge 0$ for all $i,k$, \eqref{eq:pk1} implies
$
  p^kx^k \ge Cx^k+\alpha p^{k+1}A^{k}x^k,
$
for $k=1,...,K-1$, i.e. 
\begin{equation}
  V(x^k,k) \ge Cx^k+\alpha V(x^{k+1},k+1)
\end{equation}
and similarly \eqref{eq:pK1} implies
\begin{equation}
  V(x^k,K) \ge Cx^k+\alpha V(x^{k+1},K)
\end{equation}
for $k \ge K$. By telescoping sum we obtain
$
  V(x^1,1) \ge \sum_{k=1}^L \alpha^kCx^k + V(x^L,K)
$
for any $L\ge K$, and since $V(x^L,K)\ge 0$ and $L$ is arbitrarily large we have
$
  V(x^1,1) = p^1x^1 \ge \sum_{k=1}^\infty \alpha^kCx^k.
$

To show that his also bounds the cost for the nonlinear system, we note that for any $x^k$ with elements in $[0,1]$, $x^{k+1}\ge A^kx^k$ where $x^{k+1}$ is evaluated according to \eqref{eq:disnonL}, i.e. solutions of the linear system upper bound (elementwise) those of the nonlinear system from the same initial conditions. Since the cost vector $C$ is positive, this further implies that the cost obtained with the linear system upper bounds that with the nonlinear system.
\end{proof}

In the case that the initial condition is a random variable, we can consider the expected cost $E[J(x^1)]$. Since we have $J(x^1) \le p^1x^1$ for any $x^1$, it follows that $E[J(x^1)] \le E[p^1x^1] = p^1 E[x^1]=p^1\hat x^1$. Hence the expected cost can be bounded via a linear function of the expected value of the initial state $\hat x^1$. We, therefore, define the risk associated with each state as
\begin{equation}
\label{eq:Risk2}
   R_i=p_i^1\hat{x}_i^1.
\end{equation}
An alternative to the overall risk $\sum_i R_i=p^1\hat x^1$, is to use  $\max_{i}(R_{i})$, i.e. the worst-case expected impact of a localized outbreak.

\subsection{Resource Allocation Model}
Now that the linear spreading process dynamics and risk model are defined, we can investigate how to allocate sparse resources at each time $k$. This implies that our state matrix $A$ will change each time step and we, therefore, define $A^k$ as the state matrix with resources allocated up to time $k$.

The resource model we propose is defined by the following functions:
\begin{align}
f_{ij}\left(\beta_{ij}^k\right)&=w_{ij}\text{log}\left(\frac{\overline{\beta}^k_{ij}}{\beta_{ij}^k}\right) \nonumber \\
g_{i}\left(\overline{\Delta}-\delta_{i}^k\right)&=w_{ii}\text{log}\left(\frac{\overline{\Delta}-\underline{\delta}^k_{i}}{\overline{\Delta}-\delta_{i}^k}\right) \label{eq:RM}
\end{align}
where $w_{ij}$ are weightings expressing the cost of respectively reducing $\beta_{ij}$ and increasing $\delta_{i}$.

This type of logarithmic resource allocation was discussed in \cite{LCSS2021}. These resource models can be understood as that a certain proportional increase (for $\beta_{ij}$) and decrease (for $\delta_i$) always has the same cost. Note that it is impossible for $\beta_{ij}$ to be reduced to $0$ since that would take infinite resources. Furthermore these resource models encourage sparsity as detailed further in Section \ref{subsec:spars}.

\subsection{Optimization Formulation}
We can now formulate optimization problems to solve relaxed versions of Problem 1, Resource-Constrained Risk Minimization, and Problem 2, Risk-Constrained Resource Minimization, as defined in Section \ref{subsec:prob}.
%%%%more explanation?

\begin{Oproblem}%[Resource-Constrained Risk Minimization]
Given a defined resource allocation budget $\Gamma$, a cost $c_{i}$ associated with each node $i$, find the optimal spreading and recovery rates $\beta_{ij}^k$ and $\delta_{i}^k$ for $k=1, ..., K$ that via multi-stage sparse resource allocation minimize an upper bound on the maximum risk $p_i^1\hat{x}_i^1$. This problem can be formulated as:
\begin{align}
    \underset{p^k,\beta^k, \delta^k}{\text{minimize}} & \quad   \text{max}(p_i^1\hat{x}_i^1) \label{eq:C0}\\
    \text{such that} & \quad p^k \geq 0 \\
     & \quad \alpha p^{k+1} A^k - p^{k} \leq -C, \quad  k = 1, ..., K-1 \label{eq:DCk} \\
& \quad  p^{K} (\alpha A^K-I) \le -C \label{eq:DCK}\\
    & \quad 0 < \underline{\beta}_{ij}^k \leq \beta_{ij}^k \leq \overline{\beta}_{ij}^k, \ 0 < \underline{\delta}_{i}^k \leq \delta_{i}^k \leq \overline{\delta}_{i}^k < \overline{\Delta} \label{eq:rCons} \\
 & \quad  \sum_{ij} f_{ij}\left(\beta_{ij}^k\right) + \sum_{i} g_{i}\left(\delta_{i}^k\right) \leq \Gamma^k \label{eq:C2} \\
 & \quad \sum^K_{k=1} \left(\sum_{ij} f_{ij}\left(\beta_{ij}^k\right) + \sum_{i} g_{i}\left(\delta_{i}^k\right)\right) \leq \Gamma_{\text{tot}} \label{eq:C25}.
\end{align} 
\end{Oproblem}

\begin{Oproblem}%[Risk-Constrained Resource Minimization]
Find the optimal spreading and recovery rates $\beta_{ij}^k$ and $\delta_{i}^k$ for $k=1, ..., K$ that via multi-stage sparse resource allocation minimize the amount of resources required, given an upper bound on the maximum risk $\gamma$ and a cost $c_{i}$ associated with each node $i$. 

This problem can be formulated similarly except with the left-hand-side of \eqref{eq:C2} as an objective and a constraint $p_i^1\hat{x}_i^1 \leq \gamma$ for all $i$.
\end{Oproblem}

%
%\begin{align}
%     \underset{p^k,\beta^k, \delta^k}{\text{minimize}} &  \quad   \sum_{ij} f_{ij}\left(\beta_{ij}^k\right) + \sum_{i} g_{i}\left(\delta_{i}^k\right) \\
%    \text{such that} & \quad p^k \geq 0 \\
%     & \quad \alpha p^{k+1}A^k - p^{k} \leq -C^k, \quad  k = 1, ..., K-1  \\
%& \quad  p^{K} (\alpha A^K-I) \le -C^K \\
%    & \quad 0 < \underline{\beta}_{ij}^k \leq \beta_{ij}^k \leq \overline{\beta}_{ij}^k, \ 0 < \underline{\delta}_{i}^k \leq \delta_{i}^k \leq \overline{\delta}_{i}^k < \overline{\Delta} \label{eq:rCons} \\
% & \quad \sum^k_{k=1} \left(\sum_{ij} f_{ij}\left(\beta_{ij}^k\right) + \sum_{i} g_{i}\left(\delta_{i}^k\right)\right) \leq \Gamma_{\text{tot}} \\
% &  \quad  p_i^1\hat{x}_i^1 \leq \gamma.
%\end{align}

%

This problem formulation can be extended by including surveillance scheduling and resource allocation on $\hat{x}_i^1$, see \cite{TCNS2021} for details. Note also that we can replace \eqref{eq:C0} with $\sum_ip_i^1\hat{x}_i^1$ to bound overall risk.

\section{Exponential Cone Programming}
In this section we show that Optimization Problems 1 and 2 can be reformulated as convex optimization problems, in particular exponential cone programs. Furthermore, we discuss how the proposed resource model leads to sparse resource allocation. 

We define decision variables $y_i$ and $v_i$ for all nodes $i\in \mathcal{V}$, and $u_{ij}$ for all edges $(i,j)\in\mathcal E$. To save space we present here the convex formulation of the dynamic coupling constraints 

\begin{align}
&  \text{log} \Biggl (\sum_{i: (i,j)\in \mathcal E} \text{exp} \Biggl(y_{i}^{k+1} + \text{log}\left(\alpha h\overline{\beta}^k_{ij}\right) - \frac{u_{ij}^k}{w_{ij}} -  y_{j}^k \Biggr)  \nonumber\\ 
& + \text{exp}\left(y_{j}^{k+1} + \text{log}\left(\alpha(1-h\overline{\Delta})\right) - y_{j}^k\right)  \nonumber \\
&  + \text{exp}\left(y_{j}^{k+1}+ \text{log}\left(\alpha h(\overline{\Delta}-\underline{\delta})\right) - \frac{v_{i}^k}{w_{ii}} - y_{j}^k\right) \nonumber \\
& + \text{exp}\left (\text{log}\left (c_{j}\right) - y_{j}^k\right) \Biggr) \leq 0 \quad \forall j,  \; k = 1, ..., K-1  \label{eq:CNew} \\
%\end{align}
%\begin{align}
&  \text{log} \Biggl (\sum_{i: (i,j)\in \mathcal E} \text{exp} \Biggl(y_{i}^{K} + \text{log}\left(\alpha h\overline{\beta}^K_{ij}\right) - \frac{u_{ij}^K}{w_{ij}} -  y_{j}^K \Biggr)  \nonumber\\ 
& + \text{exp}\left(\text{log}\left(\alpha(1-h\overline{\Delta})\right)\right)  + \text{exp}\left(\text{log}\left(\alpha h(\overline{\Delta}-\underline{\delta})\right) - \frac{v_{i}^K}{w_{ii}} \right) \nonumber \\
& + \text{exp}\left (\text{log}\left (c_{j}\right) - y_{j}^K\right) \Biggr) \leq 0 \quad \forall j,   \label{eq:C4} 
\end{align}

\begin{prop} 
Given that $h\overline{\Delta} < 1$, Optimization Problem 1 is equivalent to the following convex optimization problem under the transformations $y_i=\text{log}(p_i)$ and $u_{ij}=f_{ij}\left(\beta_{ij}\right)$ and $v_{i}=g_{i}\left(\overline{\Delta} - \delta_{i}\right)$
\begin{align}
     \underset{y^k, u^k, v^k}{\text{minimize}} & \quad  \text{max}(\text{log}(\hat{x}_i^1)+ y_i^1) \label{eq:OF1}\\
    \text{such that} & \quad \eqref{eq:CNew}, \quad \eqref{eq:C4} \nonumber \\
& \quad 0 \leq u_{ij}^k \leq w_{ij}\text{log}\left (\frac{\overline{\beta}_{ij}^k}{ \underline{\beta}_{ij}^k}\right ) \label{eq:CB} \\
& \quad 0 \leq v_{i}^k \leq w_{ii}\text{log}\left (\frac{1-\underline{\delta}^k_{i}}{ 1-\overline{\delta}^k_{i}}\right ) \label{eq:CD} \\
& \quad  \sum_{ij} u_{ij}^k + \sum_{i} v_{i}^k \leq \Gamma^k \label{eq:C5} \\
& \quad \sum^K_{k=1} \left(\sum_{ij} u_{ij}^k + \sum_{i} v_{i}^k\right) \leq \Gamma_{\text{tot}} \label{eq:C6}.
\end{align}
where $\overline{\beta}^k=\beta^{k-1}=\overline{\beta}^1-\sum^{k-1}_{k=1} u_{ij}^k$ and $\underline{\delta}^k=\delta^{k-1}=\underline{\delta}^1-\sum^{k-1}_{k=1} v_{i}^k$ for $k>1$.
\end{prop}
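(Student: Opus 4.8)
The plan is to prove the equivalence by exhibiting an invertible change of variables that carries the feasible set of Optimization Problem~1 onto that of the stated program, under which the objective values coincide, and then to verify that every constraint and the objective of the transformed problem are convex --- indeed exponential-cone representable. The motivation is that Optimization Problem~1 is nonconvex: the coupling constraints \eqref{eq:DCk}--\eqref{eq:DCK} are bilinear in $(p^{k+1},A^k)$, and $A^k$ depends affinely on the decision variables $\beta^k,\delta^k$ through \eqref{eq:epi}. The substitutions $p_i^k = e^{y_i^k}$, together with $u_{ij}^k = f_{ij}(\beta_{ij}^k)$ and $v_i^k = g_i(\overline\Delta-\delta_i^k)$ from \eqref{eq:RM} --- equivalently $\beta_{ij}^k = \overline\beta_{ij}^k\,e^{-u_{ij}^k/w_{ij}}$ and $\overline\Delta-\delta_i^k = (\overline\Delta-\underline\delta_i^k)\,e^{-v_i^k/w_{ii}}$ --- amount to a geometric-programming-style log transformation, and it is precisely the logarithmic form of the resource functions \eqref{eq:RM} that turns the budget constraints into linear ones.

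I would first dispatch the easy parts. On the relevant domains these substitutions are bijections: $y_i^k\in\mathbb{R}\leftrightarrow p_i^k\in(0,\infty)$, and, by \eqref{eq:bounds2} together with $h\overline\Delta<1$, $u_{ij}^k$ sweeps the interval in \eqref{eq:CB} as $\beta_{ij}^k$ ranges over $[\underline\beta_{ij}^k,\overline\beta_{ij}^k]$ and $v_i^k$ sweeps the interval in \eqref{eq:CD} as $\delta_i^k$ ranges over $[\underline\delta_i^k,\overline\delta_i^k]$; hence \eqref{eq:rCons} is equivalent to \eqref{eq:CB}--\eqref{eq:CD}. Since $u = f$ and $v = g$ by definition, the budget constraints \eqref{eq:C2}, \eqref{eq:C25} are literally \eqref{eq:C5}, \eqref{eq:C6}. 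Nonnegativity $p^k\ge 0$ is automatic for $p^k = e^{y^k}$, and the resulting restriction to strictly positive $p$ does not change the optimal value (nodes with $c_i=0$ that no positive-cost node reaches are inert and removable, and for the rest \eqref{eq:pK1}--\eqref{eq:pk1} force $p_i^k>0$ at every feasible point). Finally the objectives correspond because $\log$ is strictly increasing: $\max_i(\log\hat x_i^1+y_i^1)=\max_i\log(p_i^1\hat x_i^1)=\log\,\max_i(p_i^1\hat x_i^1)$, so \eqref{eq:OF1} and \eqref{eq:C0} are minimized at the same arguments.

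The substantive step is the transformation of the coupling constraints. Writing the $j$-th component of \eqref{eq:DCk} gives $\alpha\big(p_j^{k+1}a_{jj}^k+\sum_{i:(i,j)\in\mathcal{E}}p_i^{k+1}a_{ij}^k\big)+c_j\le p_j^k$; I would substitute $a_{ij}^k = h\overline\beta_{ij}^k e^{-u_{ij}^k/w_{ij}}$ for the off-diagonal entries and the decomposition $a_{jj}^k = 1-h\delta_j^k = (1-h\overline\Delta)+h(\overline\Delta-\delta_j^k) = (1-h\overline\Delta)+h(\overline\Delta-\underline\delta_j^k)e^{-v_j^k/w_{ii}}$ for the diagonal, then divide through by $p_j^k=e^{y_j^k}>0$ and take logarithms; this produces exactly \eqref{eq:CNew}, and the terminal inequality \eqref{eq:DCK} --- whose diagonal term carries the factor $p_j^K/p_j^K=1$ --- produces \eqref{eq:C4}. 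To obtain convexity I would then eliminate the moving bounds via $\overline\beta_{ij}^k=\beta_{ij}^{k-1}$, $\underline\delta_i^k=\delta_i^{k-1}$, which in the new variables become the telescoping identities $\log\overline\beta_{ij}^k=\log\overline\beta_{ij}^1-\sum_{k'=1}^{k-1}u_{ij}^{k'}/w_{ij}$ and $\log(\overline\Delta-\underline\delta_i^k)=\log(\overline\Delta-\underline\delta_i^1)-\sum_{k'=1}^{k-1}v_i^{k'}/w_{ii}$. With these substitutions every exponent appearing in \eqref{eq:CNew}--\eqref{eq:C4} is affine in $(y,u,v)$, so each left-hand side is a log-sum-exp of affine maps and hence convex; \eqref{eq:CB}--\eqref{eq:C6} are affine; and \eqref{eq:OF1} is a pointwise maximum of affine functions. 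All of these admit exponential-cone representations.

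I expect the bookkeeping around the diagonal entry to be the main obstacle, and it is the only place where $h\overline\Delta<1$ is used: without it, $1-h\delta_j^k$ cannot be written as a sum of positive monomial terms and $\log(\alpha(1-h\overline\Delta))$ need not be a finite constant, so the constraint would not be a genuine posynomial and the log transformation would fail. The remaining work --- matching the neighbour sums in the component form of \eqref{eq:DCk}--\eqref{eq:DCK} with those in \eqref{eq:CNew}--\eqref{eq:C4}, and checking the box endpoints --- is routine.
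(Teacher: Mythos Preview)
Your proposal is correct and follows essentially the same route as the paper's own proof: write the $j$-th component of \eqref{eq:DCk}, split the diagonal entry as $1-h\delta_j^k=(1-h\overline\Delta)+h(\overline\Delta-\delta_j^k)$, divide by $p_j^k$, and take logarithms to reach \eqref{eq:CNew}--\eqref{eq:C4}; the bounds and budget constraints transfer directly under $u=f$, $v=g$. Your write-up is in fact more complete than the paper's --- you explicitly justify bijectivity of the change of variables, the harmlessness of restricting to $p^k>0$, the telescoping of the moving bounds $\overline\beta^k,\underline\delta^k$ into affine expressions in $(u,v)$, and the log-sum-exp convexity, all of which the paper either leaves implicit or defers to a reference.
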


\begin{proof}
The objective function \eqref{eq:OF1} follows directly from \eqref{eq:C0} and $y_i=\text{log}(p_i)$. The dynamic coupling constraint for $k\neq K$ \eqref{eq:CNew} can be obtained, by rewriting \eqref{eq:DCk} as
\begin{equation}
   \sum^{n}_{i=1} \alpha p_i^{k+1} a_{ij}^k - p^k_j\leq - c_{j} \quad \forall j. 
\end{equation}
Now using \eqref{eq:epi} this can be rewritten as
\begin{equation}
\sum_{i\neq j}p_{i}^{k+1}\alpha h\beta_{ij}^k+ p^{k+1}_j\alpha\left(1-h\delta^k_j\right) -  p^k_j \leq - c_{j}  \quad \forall j
\end{equation}
which is equivalent to 
\begin{align}
\sum_{i\neq j}\frac{p_{i}^{k+1}\alpha h\beta_{ij}^k}{p^k_j}&+ \frac{p^{k+1}_j\alpha\left(1-h\Delta\right)}{p^k_j}  \nonumber \\
&+  \frac{p^{k+1}_j\alpha h(\overline{\Delta} - \delta_j^k)}{p^k_j} + \frac{c_{j}}{p^k_j} \leq 1  \quad \forall j
\end{align}
using $\overline{\Delta} > \overline{\delta}^k$. Taking the log on both sides, given that $h\overline{\Delta} < 1$, with transformation $y_i=\text{log}(p_i)$ and $u_{ij}=f_{ij}\left(\beta_{ij}\right)$ and $v_{i}=g_{i}\left(\overline{\Delta}- \delta_{i}\right)$, gives \eqref{eq:CNew}. The derivation for $k=K$ to obtain \eqref{eq:C4} from \eqref{eq:DCK} is similar, but now with $p^{k+1}=p^k$. The derivation of the bounds on the spreading rate \eqref{eq:CB} and recovery rate \eqref{eq:CD} under convex transformations $u_{ij}=f_{ij}\left(\beta_{ij}\right)$ and $v_{i}=g_{i}\left(\overline{\Delta} - \delta_{i}\right)$ can be found in \cite{LCSS2021}. Finally, equations \eqref{eq:C5} and \eqref{eq:C6} follow directly from respectively \eqref{eq:C2} and \eqref{eq:C25} and $u_{ij}=f_{ij}\left(\beta_{ij}\right)$ and $v_{i}=g_{i}\left(\overline{\Delta} - \delta_{i}\right)$. 
\end{proof}

\begin{prop} 
Given that $h\overline{\Delta} < 1$, Optimization Problem 2 is equivalent to a convex optimization problem under the transformations $y_i=\text{log}(p_i)$ and $u_{ij}=f_{ij}\left(\beta_{ij}\right)$ and $v_{i}=g_{i}\left(\overline{\Delta}- \delta_{i}\right)$. The formulation is the same as \eqref{eq:OF1} - \eqref{eq:C6} except that the objective is $\sum_{ij} u_{ij}^k + \sum_{i} v_{i}^k$ and in place of \eqref{eq:C5} we have the constraint $ \text{log}(\hat{x}_i^1)+ y_i^1 \leq \gamma$.
%\begin{align}
%     \underset{y^k, u^k, v^k}{\text{minimize}} & \quad   \sum_{ij} u_{ij}^k + \sum_{i} v_{i}^k \label{eq:OF12}\\
%    \text{such that} & \quad \eqref{eq:CNew}, \quad \eqref{eq:C4} \nonumber \\
%& \quad 0 \leq u_{ij}^k \leq w_{ij}\text{log}\left (\frac{\overline{\beta}_{ij}^k}{ \underline{\beta}_{ij}^k}\right ) \label{eq:CB2} \\
%& \quad 0 \leq v_{i}^k \leq w_{ii}\text{log}\left (\frac{1-\underline{\delta}^k_{i}}{ 1-\overline{\delta}^k_{i}}\right ) \label{eq:CD2} \\
%& \quad  \text{log}(\hat{x}_i^1)+ y_i^1 \leq \gamma \label{eq:C52} \\
%& \quad \sum^k_{k=1} \left(\sum_{ij} u_{ij}^k + \sum_{i} v_{i}^k\right) \leq \Gamma_{\text{tot}} \label{eq:C62}.
%\end{align}
\end{prop}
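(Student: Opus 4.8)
The plan is to derive this proposition directly from Proposition 3 and the change of variables used in its proof, since Optimization Problem 2 shares with Optimization Problem 1 every ingredient except that it trades the per-stage budget constraint \eqref{eq:C5} for a bound on the risk and promotes the resource functional (the left-hand side of \eqref{eq:C5}) from a constraint to the objective. Indeed, the dynamic coupling constraints \eqref{eq:DCk}--\eqref{eq:DCK}, the rate bounds \eqref{eq:rCons}, and the aggregate budget \eqref{eq:C25} appear verbatim in both problems, and the proof of Proposition 3 already establishes that under $y_i=\log p_i$, $u_{ij}=f_{ij}(\beta_{ij})$ and $v_i=g_i(\overline{\Delta}-\delta_i)$ these map onto the convex constraints \eqref{eq:CNew}, \eqref{eq:C4}, \eqref{eq:CB}, \eqref{eq:CD} and \eqref{eq:C6}. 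So I would only need to check that the two modifications are also convex under the same transformation and correspond exactly to their originals.

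First, the objective: the original objective of Problem 2 is the resource-use functional $\sum_{ij} f_{ij}(\beta_{ij}^k)+\sum_i g_i(\delta_i^k)$ (the left-hand side of \eqref{eq:C2}, possibly aggregated over $k$ as in \eqref{eq:C25}), which under the substitution becomes $\sum_{ij} u_{ij}^k+\sum_i v_i^k$, an affine---hence convex---function of the new variables whose value is preserved exactly by the transformation. Second, the risk constraint: requiring the maximum risk not to exceed $\gamma$, i.e. $p_i^1\hat{x}_i^1\le\gamma$ for all $i$, can be put in logarithmic form because $\hat{x}_i^1>0$ (a probability) and $p_i^1>0$ (positivity of the cost-to-go, as already used implicitly in Proposition 3); taking logarithms gives $\log\hat{x}_i^1+y_i^1\le\log\gamma$, which is the constraint stated in the proposition once $\gamma$ is read on a logarithmic scale, and which is affine in $y^1$ and hence convex. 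Since $\log$ is monotone, a point satisfies the original risk bound if and only if its image satisfies this one.

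I would then assemble the equivalence exactly as for Proposition 3: the maps $p_i\mapsto y_i$, $\beta_{ij}\mapsto u_{ij}$ and $\delta_i\mapsto v_i$ are smooth bijections from the feasible domains of the original variables onto those of the new ones (which is precisely how the bounds \eqref{eq:CB}--\eqref{eq:CD} arise, cf.\ \cite{LCSS2021}); under this bijection the feasible set of Problem 2 maps onto that of the convex program in the statement and the objective values coincide, so the two problems share the same optimal value and their optimizers are in one-to-one correspondence.

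The main thing requiring care---the ``hard part,'' such as it is---is bookkeeping the domains of this change of variables: the endpoints $\overline{\beta}^k$ and $\underline{\delta}^k$ are themselves functions of the earlier $u$ and $v$ through the recursions stated at the end of the proposition, so I would have to verify stage by stage that $\beta_{ij}^k\mapsto u_{ij}^k$ and $\delta_i^k\mapsto v_i^k$ remain bijections onto the indicated intervals. Past this, the proposition introduces no analytical content beyond Proposition 3, and the argument is essentially the one already written there with the roles of objective and per-stage constraint interchanged.
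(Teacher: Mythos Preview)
Your approach is correct and matches the paper's own proof, which consists of the single line ``Similar to the proof of Proposition 2.'' Note that when you write ``Proposition 3'' you presumably mean the preceding proposition (Proposition 2 in the paper's numbering), but the argument itself is sound and in fact supplies considerably more detail than the paper does.
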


\begin{proof}
Similar to the proof of Proposition 2.
\end{proof}

\subsection{Sparsity}
\label{subsec:spars}
A significant benefit of the the exponential cone programming formulation is that it encourages sparse resource allocation. Our resource models, now formulated as constraint \eqref{eq:C5} are $\ell_{1}$ norm constraints and objectives, since $u_{ij}\geq 0$ and $v_i \geq 0$, which are widely used to encourage sparsity \cite{tibshirani1996regression,candes2006robust,donoho2006compressed,Candes2008}.

If the goal is maximal sparsity, i.e. minimal number of nodes and edges with non-zero resources allocated, then we can use the reweighted $\ell_{1}$ optimization approach of \cite{Candes2008}.  We can adapt this to our problem by iteratively solving the problem, but with a reweighted resource model that approximates the number of nodes and edges with non-zero allocation, e.g.
\begin{equation}
\label{eq:L1} 
\phi^q=\sum_{ij} \frac{u^{k^q}_{ij}}{u_{ij}^{k^{q-1}}+\epsilon}+\sum_{i} \frac{v^{k^q}_{i}}{v_{i}^{k^{q-1}}+\epsilon}
\end{equation} 
where $q$ is the iteration number and $\epsilon$ a small positive constant to improve numerical stability. If we use our resource model as a constraint in the optimization problem, we now replace the constraint with $\phi^q \leq M$ where $M$ is the bound on the number of nodes and edges that can have resources allocated to them. This iteration has no guarantee of convergence or global optimality, but has been found to be very effective in practice.

\section{RESULTS}
In this section we illustrate how our proposed method can be implemented with examples for two different spreading processes; epidemics and wildfires. First, we discuss a 7 node epidemic example to demonstrate how the optimization framework can be used for deriving a vaccination strategy. Second, we demonstrate with a 1000 node wildfire example that our multi-stage approach provides sparse results.

\subsection{Epidemic Example}
To demonstrate how the optimization framework can be utilized we take an example of an epidemic spreading on a graph $\mathcal{G}(\mathcal{V},\mathcal{E})$ with $n=7$ nodes as visualized in Fig. \ref{fig:GraphAM1}. Here, node color indicates estimated outbreak probability $\hat{x}_i^1$, weightings $w_{ij}$ are indicated at the edges and node marker size indicates cost $c_i$. We take homogeneous spreading and recovery rates of $\beta_{ij}=0.35$ for all $(i,j) \in \mathcal{E}$ and $\delta=0.2$ for all $i \in \mathcal{V}$. In this particular example we have a vulnerable high cost node $i=7$, which could be an elderly person, which is connected to $i=6$. Node $i=6$ is connected to node $i=1$, who has a high likelihood of getting the disease, e.g. due to their work situation.

\begin{figure}[!t]
\centering
 \def\svgwidth{0.75\linewidth}
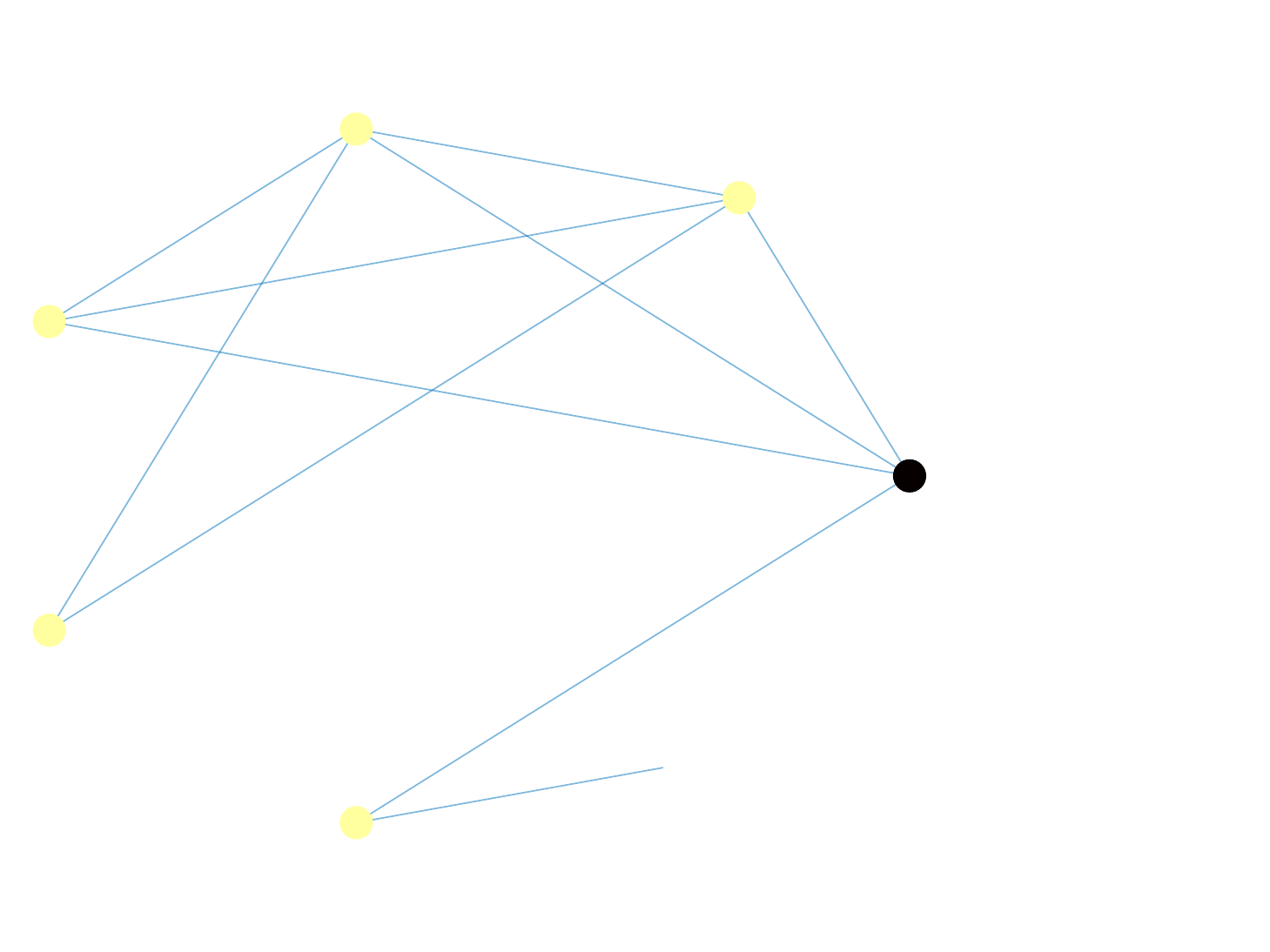    
\caption{Epidemic spreading over a graph with $n=7$ nodes. Node color indicates outbreak probability $\hat{x}_i^1$, weightings $w_{ij}$ are indicated at the edges, node marker size indicates cost $c_i$.}
\label{fig:GraphAM1}
\end{figure}

We, now, want to use our multi-stage approach to obtain a vaccination strategy. We model vaccination similar to other common methods, where vaccination of a node increases both its recovery rate \cite{forster2007optimizing,hansen2011optimal,Han2015} and decreases all incoming spreading rates \cite{zaman2008stability,hethcote2000mathematics,chen2006susceptible,kar2011stability}. Setting $\overline{\Delta}=1$ and taking $\alpha=0.93$, $h=0.24$ and $\Gamma^k=1.5$ for all $k$ with $K=4$, the obtained resource allocation, or vaccination strategy, is given in Fig. \ref{fig:03}. 

\begin{figure}
\centering
\begin{subfigure}[b]{0.45\linewidth}
         \def\svgwidth{1\textwidth}
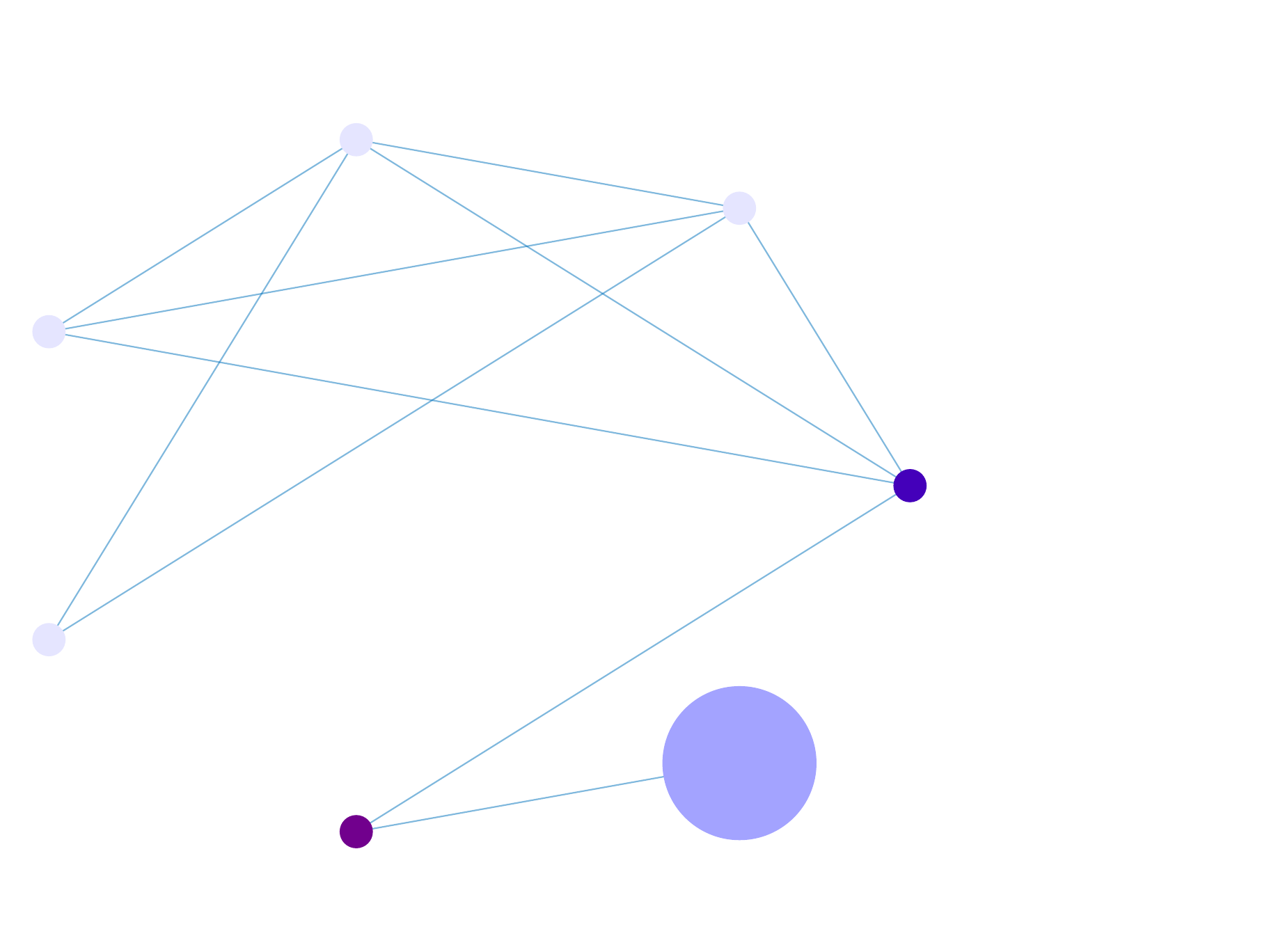  
      \caption{k=1}
      \label{fig:031}
  \end{subfigure}
  ~ %add desired spacing between images, e. g. ~, \quad, \qquad, \hfill etc. 
    %(or a blank line to force the subfigure onto a new line)
     \begin{subfigure}[b]{0.45\linewidth}
         \def\svgwidth{1\textwidth}
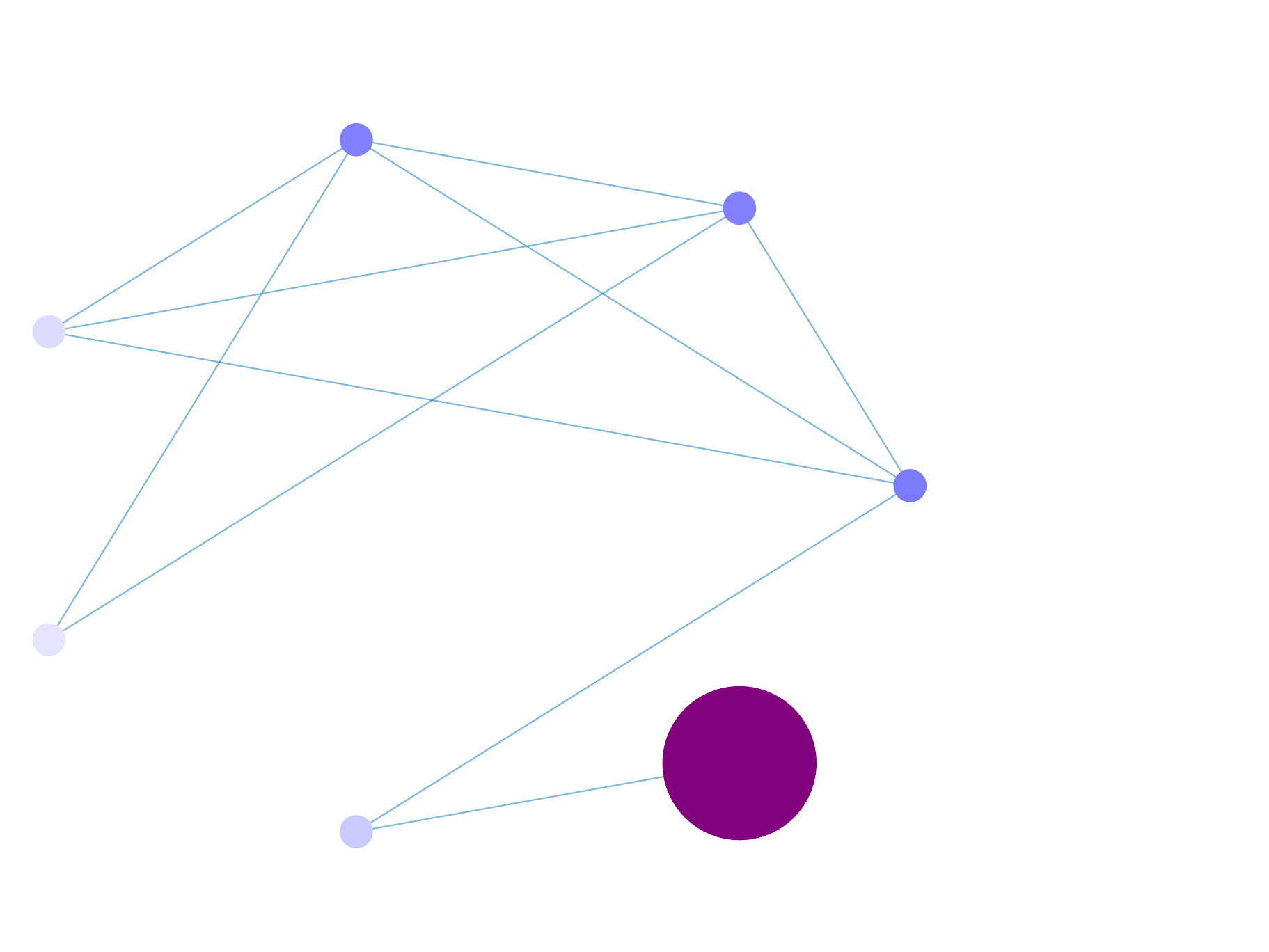 
\caption{k=2}
      \label{fig:032}
  \end{subfigure}
  \begin{subfigure}[b]{0.45\linewidth}
         \def\svgwidth{1\textwidth}
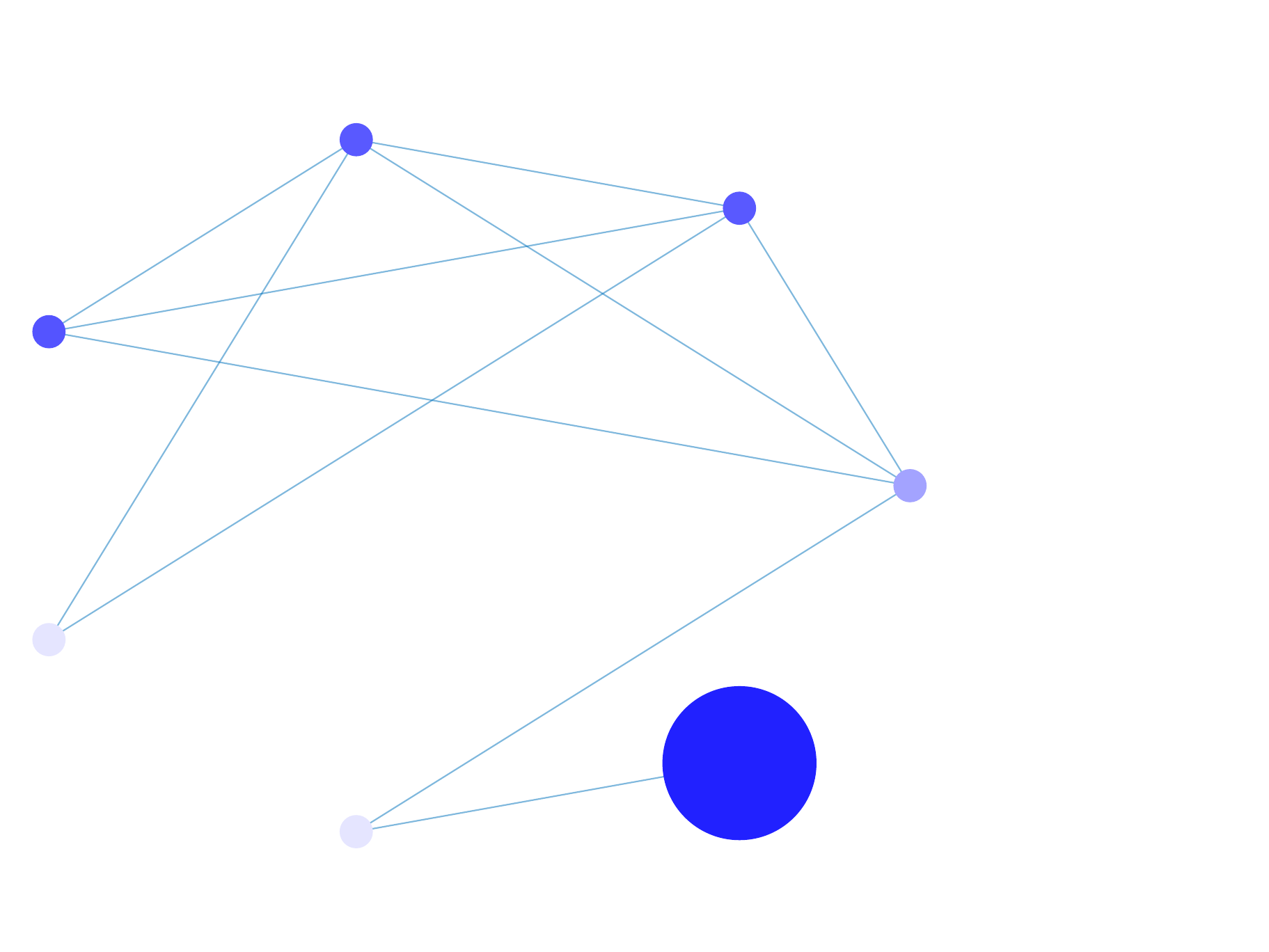 
      \caption{k=3}
      \label{fig:033}
  \end{subfigure}
  ~ %add desired spacing between images, e. g. ~, \quad, \qquad, \hfill etc. 
    %(or a blank line to force the subfigure onto a new line)
     \begin{subfigure}[b]{0.45\linewidth}
         \def\svgwidth{1\textwidth}
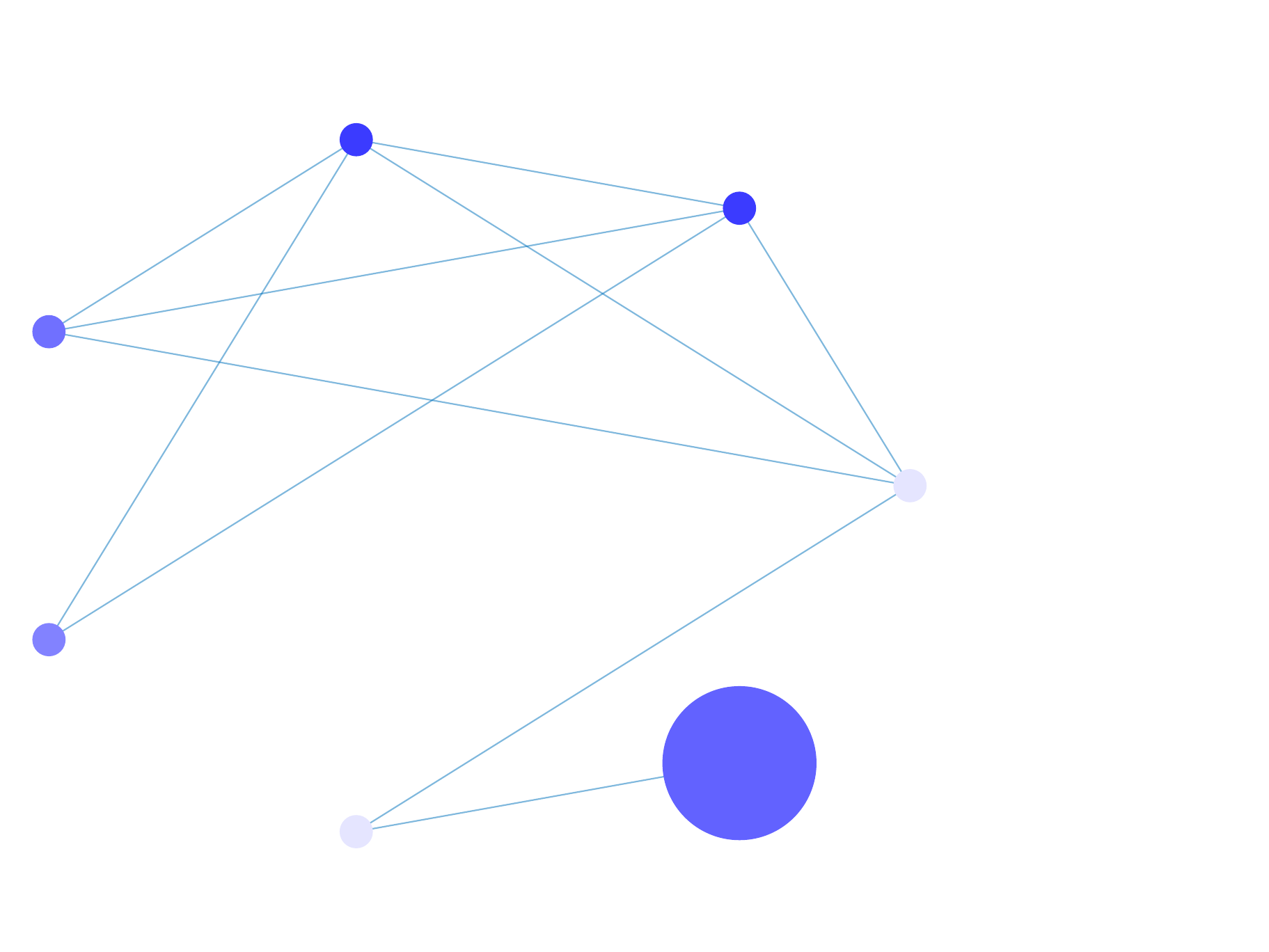 
\caption{k=K}
      \label{fig:034}
  \end{subfigure}
      \caption{Vaccination strategy for a 4-stage approach with $\Gamma^k=1.5$, $h=0.24$, $\alpha=0.93$. Node color indicates nodes vaccinated.}     
\label{fig:03}
\end{figure}

It can be seen that we should first vaccinate node $i=1$, with a high likelihood of the disease, and node $i=6$, connected to the vulnerable node $i=7$, which is vaccinated next, before moving on to the rest of the population. Decreasing the interval $h= t^{k+1}-t^k$ would result in a similar vaccination strategy, but with a slower roll out, i.e. more time steps are needed to get to the general population.

To further improve sparsity we take the reweighted $\ell_1$ minimization as explained in Section \ref{subsec:spars}. We now solve, however, Optimization Problem 2, keeping the same risk bound as obtained for Fig. \ref{fig:03}, but minimize the amount of non-zero allocations. The results are shown in Fig. \ref{fig:Ex1RL1}. The solution is to vaccinate the node with a high likelihood $i=1$ and next the vulnerable node $i=7$ before vaccinating 2 members of the general population.  

\begin{figure}
\centering
\begin{subfigure}[b]{0.45\linewidth}
         \def\svgwidth{1\textwidth}
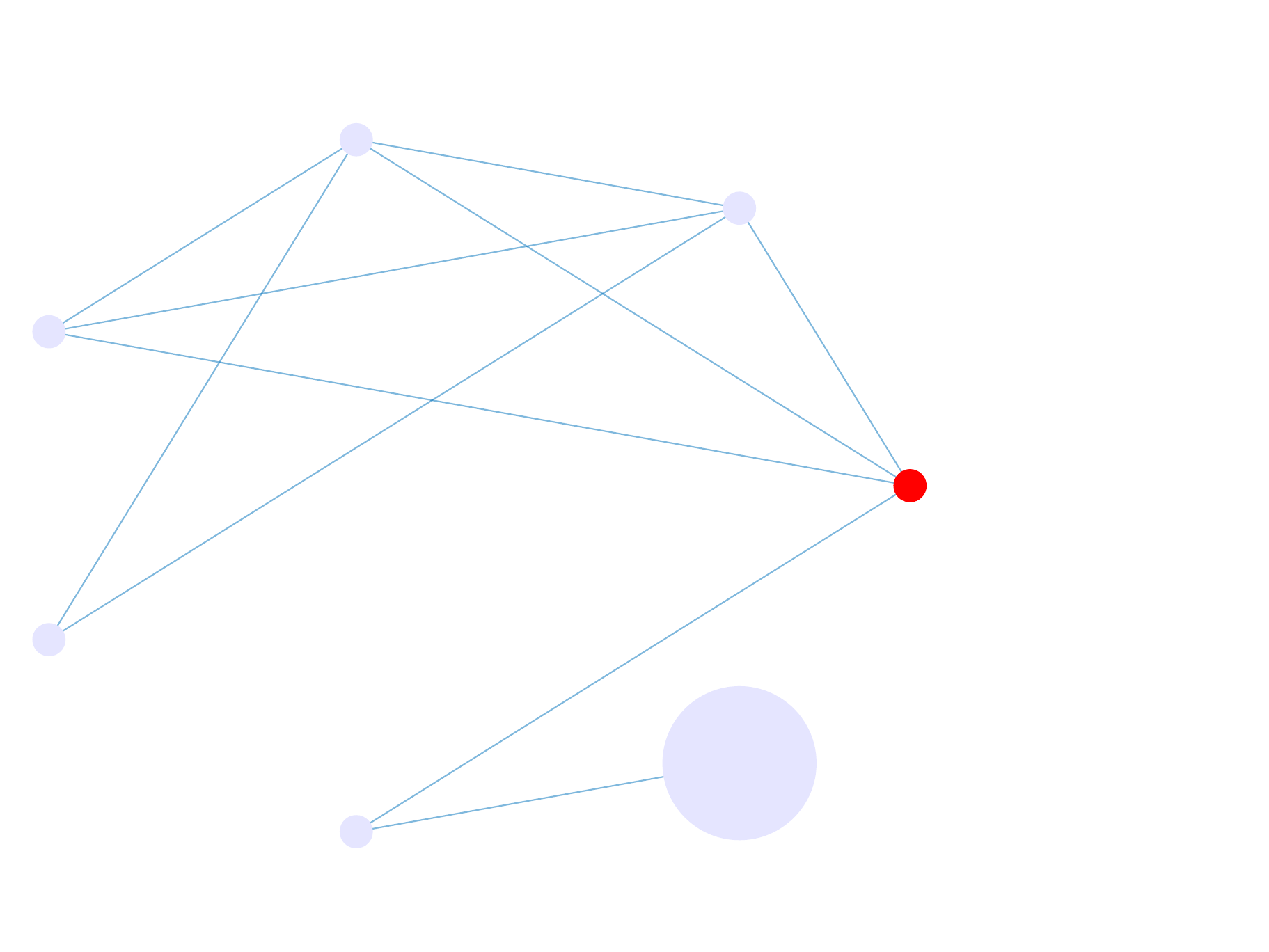  
      \caption{k=1}
      \label{fig:Ex1Ra}
  \end{subfigure}
  ~ %add desired spacing between images, e. g. ~, \quad, \qquad, \hfill etc. 
    %(or a blank line to force the subfigure onto a new line)
     \begin{subfigure}[b]{0.45\linewidth}
         \def\svgwidth{1\textwidth}
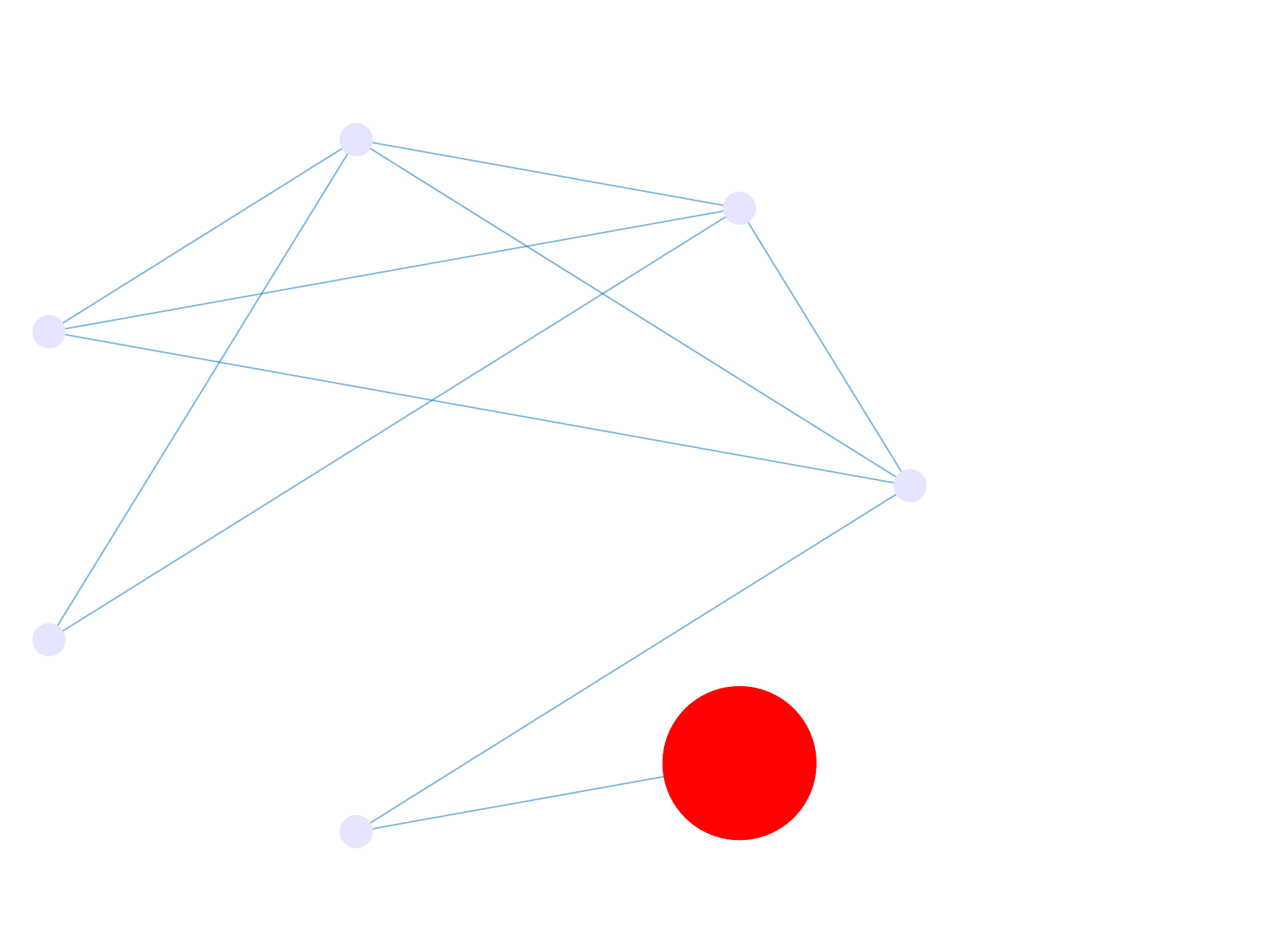 
\caption{k=2}
      \label{fig:Ex1Rb}
  \end{subfigure}
  \begin{subfigure}[b]{0.45\linewidth}
         \def\svgwidth{1\textwidth}
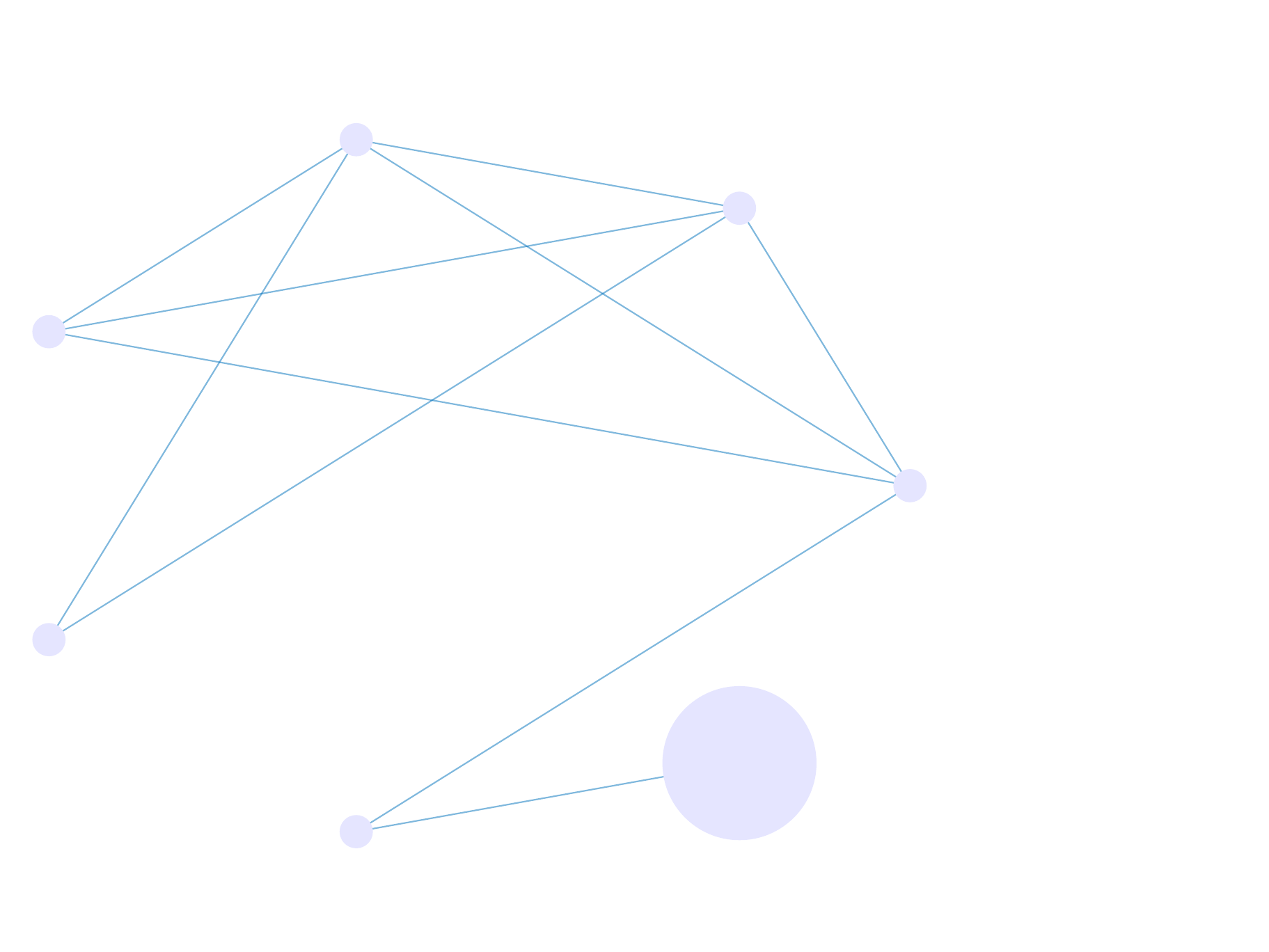  
      \caption{k=3}
      \label{fig:Ex1Rc}
  \end{subfigure}
  ~ %add desired spacing between images, e. g. ~, \quad, \qquad, \hfill etc. 
    %(or a blank line to force the subfigure onto a new line)
     \begin{subfigure}[b]{0.45\linewidth}
         \def\svgwidth{1\textwidth}
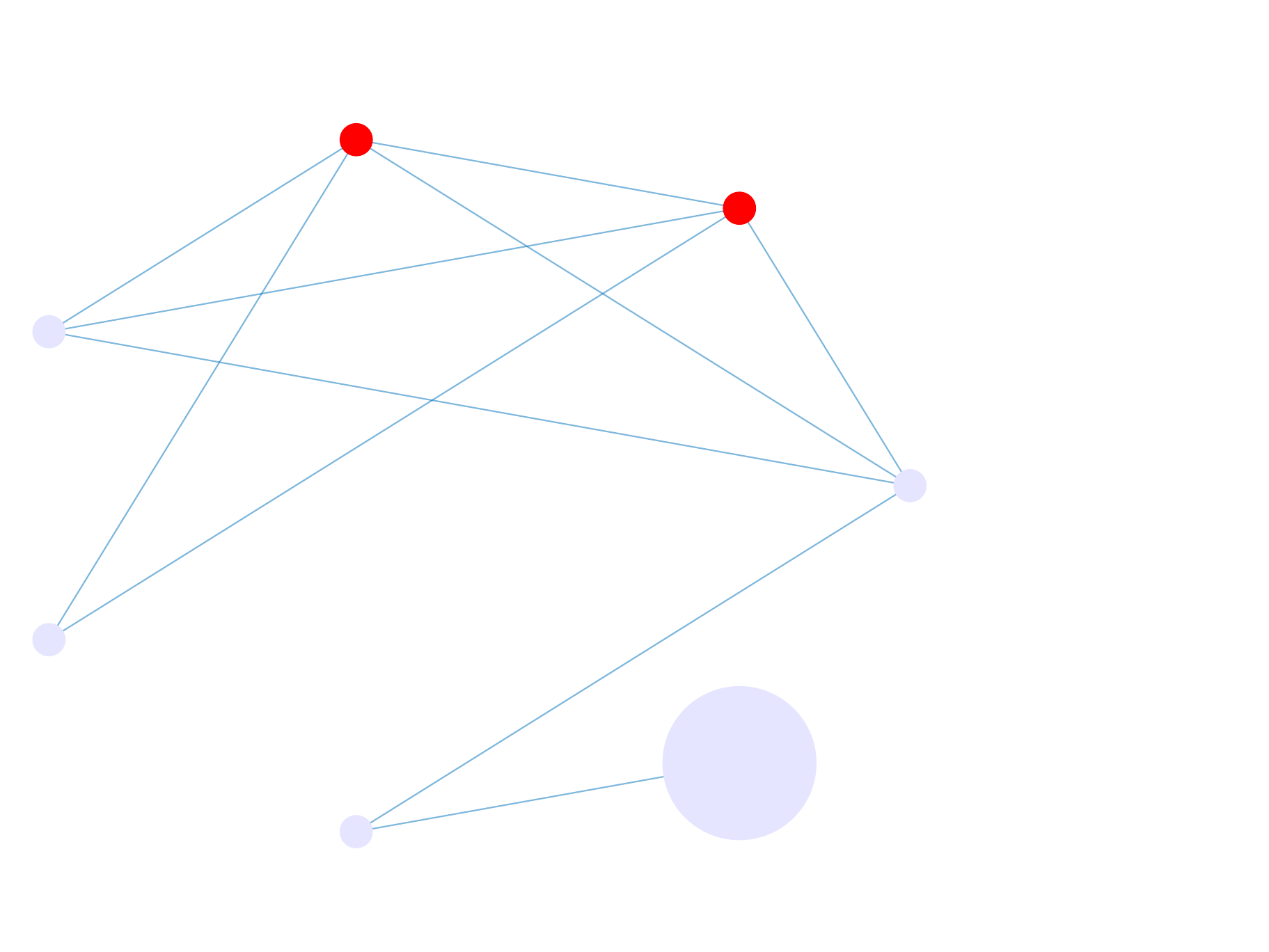 
\caption{k=K}
      \label{fig:Ex1Rd}
  \end{subfigure}
      \caption{Reweighted vaccination strategy of Fig. \ref{fig:03}}.      
\label{fig:Ex1RL1}
\end{figure}

\subsection{Wildfire Example}
Let us consider the fictional landscape given in Fig. \ref{fig:Landscape} consisting of different vegetation types, a city and water. We represent this landscape as a graph $\mathcal{G}(\mathcal{V},\mathcal{E})$ with $n=1000$ nodes, where the edge set $\mathcal{E}$ connects each node to its neighboring 8 nodes, i.e. fire can spread horizontally, vertically and diagonally. Boundary nodes have fewer edges accordingly.

\begin{figure}
\centering
\begin{subfigure}[b]{0.9\linewidth}
\def\svgwidth{1\textwidth}
\includegraphics[width=1\linewidth]{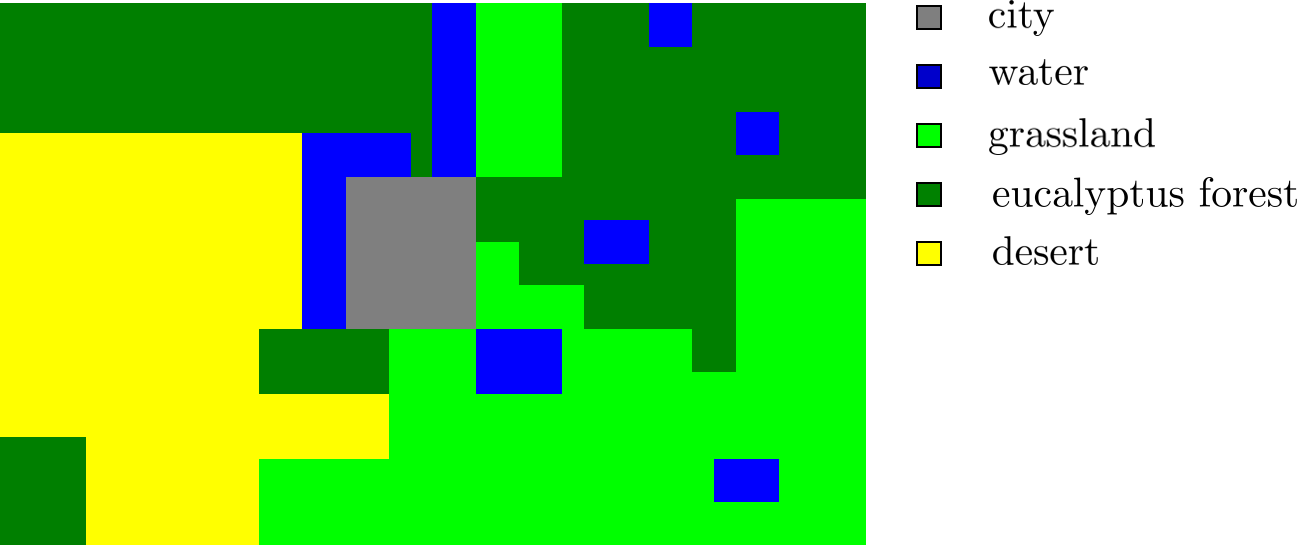}
      \caption{Vegetation Map}
\label{fig:Landscape}
\end{subfigure}
\begin{subfigure}[b]{0.9\linewidth}
\def\svgwidth{0.79\textwidth}
%% Creator: Inkscape 1.0 (4035a4fb49, 2020-05-01), www.inkscape.org
%% PDF/EPS/PS + LaTeX output extension by Johan Engelen, 2010
%% Accompanies image file '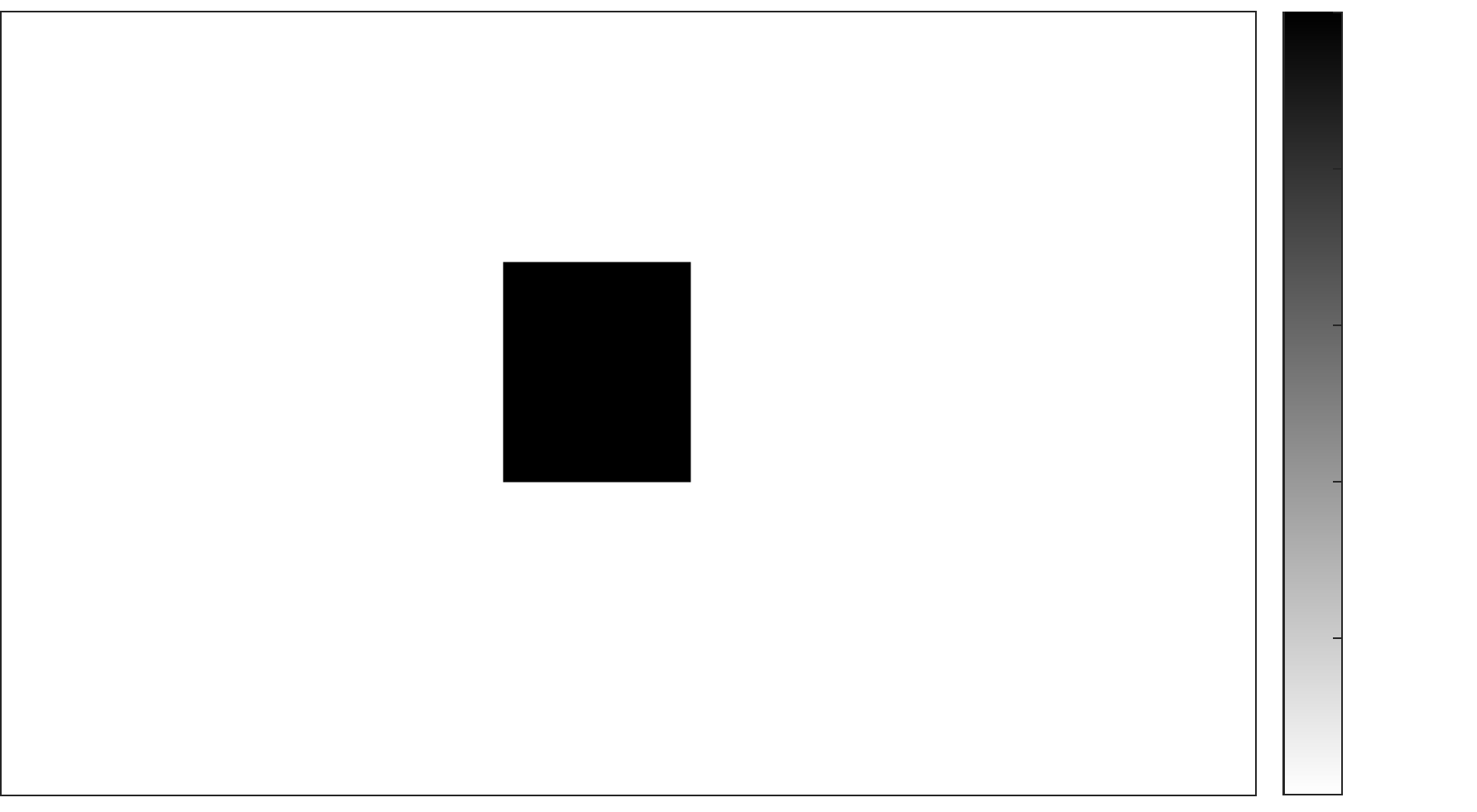' (pdf, eps, ps)
%%
%% To include the image in your LaTeX document, write
%%   \input{<filename>.pdf_tex}
%%  instead of
%%   \includegraphics{<filename>.pdf}
%% To scale the image, write
%%   \def\svgwidth{<desired width>}
%%   \input{<filename>.pdf_tex}
%%  instead of
%%   \includegraphics[width=<desired width>]{<filename>.pdf}
%%
%% Images with a different path to the parent latex file can
%% be accessed with the `import' package (which may need to be
%% installed) using
%%   \usepackage{import}
%% in the preamble, and then including the image with
%%   \import{<path to file>}{<filename>.pdf_tex}
%% Alternatively, one can specify
%%   \graphicspath{{<path to file>/}}
%% 
%% For more information, please see info/svg-inkscape on CTAN:
%%   http://tug.ctan.org/tex-archive/info/svg-inkscape
%%
\begingroup%
  \makeatletter%
  \providecommand\color[2][]{%
    \errmessage{(Inkscape) Color is used for the text in Inkscape, but the package 'color.sty' is not loaded}%
    \renewcommand\color[2][]{}%
  }%
  \providecommand\transparent[1]{%
    \errmessage{(Inkscape) Transparency is used (non-zero) for the text in Inkscape, but the package 'transparent.sty' is not loaded}%
    \renewcommand\transparent[1]{}%
  }%
  \providecommand\rotatebox[2]{#2}%
  \newcommand*\fsize{\dimexpr\f@size pt\relax}%
  \newcommand*\lineheight[1]{\fontsize{\fsize}{#1\fsize}\selectfont}%
  \ifx\svgwidth\undefined%
    \setlength{\unitlength}{513.52734375bp}%
    \ifx\svgscale\undefined%
      \relax%
    \else%
      \setlength{\unitlength}{\unitlength * \real{\svgscale}}%
    \fi%
  \else%
    \setlength{\unitlength}{\svgwidth}%
  \fi%
  \global\let\svgwidth\undefined%
  \global\let\svgscale\undefined%
  \makeatother%
  \begin{picture}(1,0.54973215)%
    \lineheight{1}%
    \setlength\tabcolsep{0pt}%
    \put(0,0){\includegraphics[width=\unitlength,page=1]{Ex2Cost.pdf}}%
    \put(0.91915741,0.00039793){\makebox(0,0)[lt]{\lineheight{1.25}\smash{\begin{tabular}[t]{l}\scriptsize 0\end{tabular}}}}%
    \put(0.91915741,0.10642928){\makebox(0,0)[lt]{\lineheight{1.25}\smash{\begin{tabular}[t]{l}\scriptsize 0.2\end{tabular}}}}%
    \put(0.91915741,0.21246064){\makebox(0,0)[lt]{\lineheight{1.25}\smash{\begin{tabular}[t]{l}\scriptsize 0.4\end{tabular}}}}%
    \put(0.91915741,0.31849199){\makebox(0,0)[lt]{\lineheight{1.25}\smash{\begin{tabular}[t]{l}\scriptsize 0.6\end{tabular}}}}%
    \put(0.91915741,0.42452335){\makebox(0,0)[lt]{\lineheight{1.25}\smash{\begin{tabular}[t]{l}\scriptsize 0.8\end{tabular}}}}%
    \put(0.91915741,0.5305547){\makebox(0,0)[lt]{\lineheight{1.25}\smash{\begin{tabular}[t]{l}\scriptsize 1\end{tabular}}}}%
    \put(0.99999,0.24137857){\rotatebox{90}{\makebox(0,0)[lt]{\lineheight{1.25}\smash{\begin{tabular}[t]{l}\small Cost\end{tabular}}}}}%
    \put(0,0){\includegraphics[width=\unitlength,page=2]{Ex2Cost.pdf}}%
  \end{picture}%
\endgroup%
   
      \caption{Cost Map}
      \label{fig:53Cost}
  \end{subfigure}
  ~ %add desired spacing between images, e. g. ~, \quad, \qquad, \hfill etc. 
    %(or a blank line to force the subfigure onto a new line)
     \begin{subfigure}[b]{0.9\linewidth}
\def\svgwidth{0.79\textwidth}
%% Creator: Inkscape 1.0 (4035a4fb49, 2020-05-01), www.inkscape.org
%% PDF/EPS/PS + LaTeX output extension by Johan Engelen, 2010
%% Accompanies image file '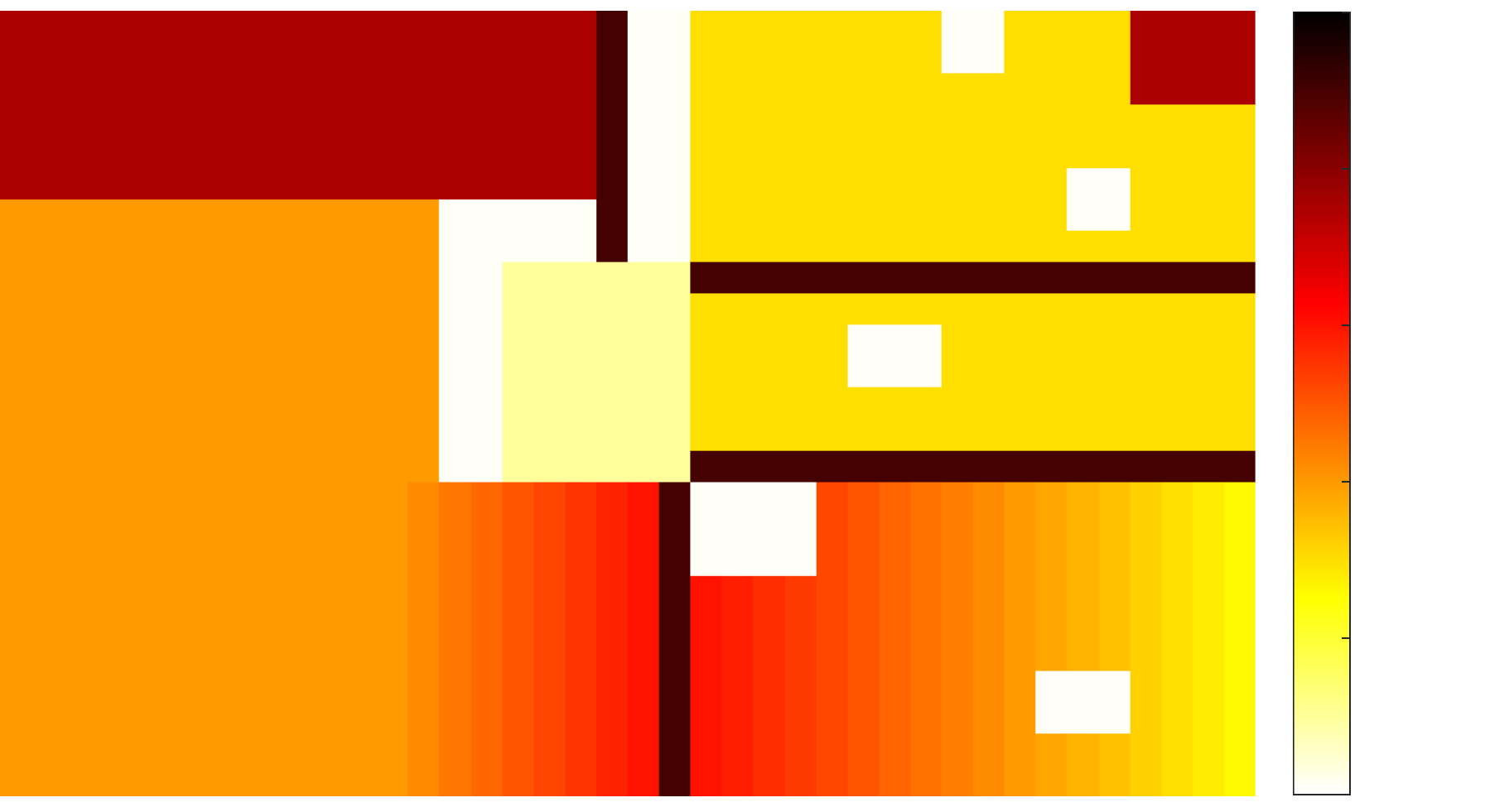' (pdf, eps, ps)
%%
%% To include the image in your LaTeX document, write
%%   \input{<filename>.pdf_tex}
%%  instead of
%%   \includegraphics{<filename>.pdf}
%% To scale the image, write
%%   \def\svgwidth{<desired width>}
%%   \input{<filename>.pdf_tex}
%%  instead of
%%   \includegraphics[width=<desired width>]{<filename>.pdf}
%%
%% Images with a different path to the parent latex file can
%% be accessed with the `import' package (which may need to be
%% installed) using
%%   \usepackage{import}
%% in the preamble, and then including the image with
%%   \import{<path to file>}{<filename>.pdf_tex}
%% Alternatively, one can specify
%%   \graphicspath{{<path to file>/}}
%% 
%% For more information, please see info/svg-inkscape on CTAN:
%%   http://tug.ctan.org/tex-archive/info/svg-inkscape
%%
\begingroup%
  \makeatletter%
  \providecommand\color[2][]{%
    \errmessage{(Inkscape) Color is used for the text in Inkscape, but the package 'color.sty' is not loaded}%
    \renewcommand\color[2][]{}%
  }%
  \providecommand\transparent[1]{%
    \errmessage{(Inkscape) Transparency is used (non-zero) for the text in Inkscape, but the package 'transparent.sty' is not loaded}%
    \renewcommand\transparent[1]{}%
  }%
  \providecommand\rotatebox[2]{#2}%
  \newcommand*\fsize{\dimexpr\f@size pt\relax}%
  \newcommand*\lineheight[1]{\fontsize{\fsize}{#1\fsize}\selectfont}%
  \ifx\svgwidth\undefined%
    \setlength{\unitlength}{516.21487427bp}%
    \ifx\svgscale\undefined%
      \relax%
    \else%
      \setlength{\unitlength}{\unitlength * \real{\svgscale}}%
    \fi%
  \else%
    \setlength{\unitlength}{\svgwidth}%
  \fi%
  \global\let\svgwidth\undefined%
  \global\let\svgscale\undefined%
  \makeatother%
  \begin{picture}(1,0.54687012)%
    \lineheight{1}%
    \setlength\tabcolsep{0pt}%
    \put(0,0){\includegraphics[width=\unitlength,page=1]{Ex2L.pdf}}%
    \put(0.91957826,0.00039585){\makebox(0,0)[lt]{\lineheight{1.25}\smash{\begin{tabular}[t]{l}\scriptsize0\end{tabular}}}}%
    \put(0.91957826,0.10587519){\makebox(0,0)[lt]{\lineheight{1.25}\smash{\begin{tabular}[t]{l}\scriptsize0.2\end{tabular}}}}%
    \put(0.91957826,0.21135452){\makebox(0,0)[lt]{\lineheight{1.25}\smash{\begin{tabular}[t]{l}\scriptsize0.4\end{tabular}}}}%
    \put(0.91957826,0.31683385){\makebox(0,0)[lt]{\lineheight{1.25}\smash{\begin{tabular}[t]{l}\scriptsize0.6\end{tabular}}}}%
    \put(0.91957826,0.42231318){\makebox(0,0)[lt]{\lineheight{1.25}\smash{\begin{tabular}[t]{l}\scriptsize0.8\end{tabular}}}}%
    \put(0.91957826,0.52779251){\makebox(0,0)[lt]{\lineheight{1.25}\smash{\begin{tabular}[t]{l}\scriptsize1\end{tabular}}}}%
    \put(0.99999,0.05457196){\rotatebox{90}{\makebox(0,0)[lt]{\lineheight{1.25}\smash{\begin{tabular}[t]{l}\small Outbreak Probability\end{tabular}}}}}%
    \put(0,0){\includegraphics[width=\unitlength,page=2]{Ex2L.pdf}}%
  \end{picture}%
\endgroup%
   
      \caption{Outbreak Probability}
      \label{fig:LM}
  \end{subfigure}
      \caption{Fictional landscape with different area types, represented as a grid with $n=1000$ nodes with its corresponding cost map and estimated outbreak probability. City nodes have a high cost associated with them.}
\label{fig:53}
\end{figure}

The spreading rates are determined by the vegetation type, wind speed and direction, as per the wildfire models in \cite{Karafyllidis1997a} and \cite{Alexandridis2008a}. The spreading rate for an edge $\beta=\beta_{b}\beta_{veg}\beta_{w}$ consists of a baseline spreading rate $\beta_{b}=0.5$ and $ \beta_{veg}=0.1, 1$ and $1.4$ for respectively desert, grassland and eucalypt forest. Water is considered unburnable and those edges are removed. $\beta_{w}$ is calculated following \cite{Alexandridis2008a} for a northeasterly wind of $V=4$ m/s. The spreading rate is corrected for the spread between diagonally connected nodes, following \cite{Karafyllidis1997a}. We set the recovery rate $\delta=0.5$ for all nodes $i \in \mathcal{V}$.

We consider resource allocation to reduce the spreading rates $\beta_{ij}$. We set $w_{ij}=1$ for all edges $(i,j) \in \mathcal{E}$, indicating an equal cost to apply resources to any edge. The cost of the city nodes is taken as $c_{i}=1$, whereas $c_{i}=0.001$ for all other nodes as illustrated in Fig. \ref{fig:53Cost}. This could reflect either the higher economic cost of fire reaching a city or the higher risk to human life. The outbreak probability $\hat{x}_{i}^1$ is depicted in Fig. \ref{fig:LM}, e.g. this could reflect that fires are more likely to start near roads. The discount factor is set to $\alpha=0.9$. 

We now take a multi-stage approach with 4 steps, i.e. $K=4$, and set $h=0.036$. The results are illustrated in Fig. \ref{fig:FG10A}. The accumulated amount of resources are shown, where the new resources added per time step have a higher line width and previous resources added a smaller line width and higher transparency. It can be seen that effectively a wall is built around the city: in the first stage, resources are allocated to stop the spread from in particular eucalypt areas in the north, with a high spreading rate and high outbreak probability, to the city. In the next stages, resources are invested on the east and south to guard against spread from those areas.  

\begin{figure*}
\centering
\begin{subfigure}[b]{0.45\linewidth}
        \def\svgwidth{1\textwidth}
%% Creator: Inkscape 1.0 (4035a4fb49, 2020-05-01), www.inkscape.org
%% PDF/EPS/PS + LaTeX output extension by Johan Engelen, 2010
%% Accompanies image file '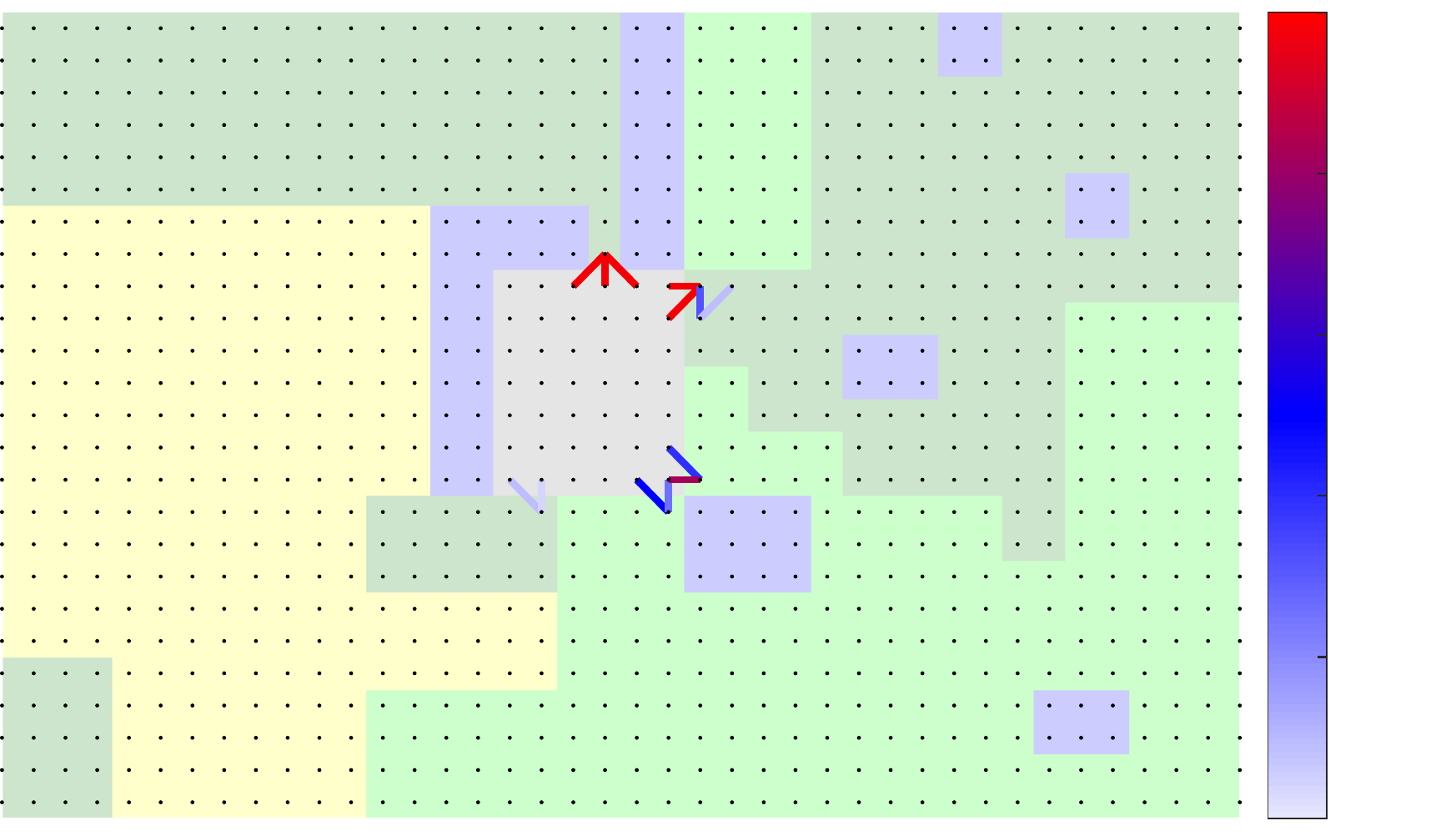' (pdf, eps, ps)
%%
%% To include the image in your LaTeX document, write
%%   \input{<filename>.pdf_tex}
%%  instead of
%%   \includegraphics{<filename>.pdf}
%% To scale the image, write
%%   \def\svgwidth{<desired width>}
%%   \input{<filename>.pdf_tex}
%%  instead of
%%   \includegraphics[width=<desired width>]{<filename>.pdf}
%%
%% Images with a different path to the parent latex file can
%% be accessed with the `import' package (which may need to be
%% installed) using
%%   \usepackage{import}
%% in the preamble, and then including the image with
%%   \import{<path to file>}{<filename>.pdf_tex}
%% Alternatively, one can specify
%%   \graphicspath{{<path to file>/}}
%% 
%% For more information, please see info/svg-inkscape on CTAN:
%%   http://tug.ctan.org/tex-archive/info/svg-inkscape
%%
\begingroup%
  \makeatletter%
  \providecommand\color[2][]{%
    \errmessage{(Inkscape) Color is used for the text in Inkscape, but the package 'color.sty' is not loaded}%
    \renewcommand\color[2][]{}%
  }%
  \providecommand\transparent[1]{%
    \errmessage{(Inkscape) Transparency is used (non-zero) for the text in Inkscape, but the package 'transparent.sty' is not loaded}%
    \renewcommand\transparent[1]{}%
  }%
  \providecommand\rotatebox[2]{#2}%
  \newcommand*\fsize{\dimexpr\f@size pt\relax}%
  \newcommand*\lineheight[1]{\fontsize{\fsize}{#1\fsize}\selectfont}%
  \ifx\svgwidth\undefined%
    \setlength{\unitlength}{502.61453247bp}%
    \ifx\svgscale\undefined%
      \relax%
    \else%
      \setlength{\unitlength}{\unitlength * \real{\svgscale}}%
    \fi%
  \else%
    \setlength{\unitlength}{\svgwidth}%
  \fi%
  \global\let\svgwidth\undefined%
  \global\let\svgscale\undefined%
  \makeatother%
  \begin{picture}(1,0.57360556)%
    \lineheight{1}%
    \setlength\tabcolsep{0pt}%
    \put(0,0){\includegraphics[width=\unitlength,page=1]{Ex2aN.pdf}}%
    \put(0.92195404,0.00040657){\makebox(0,0)[lt]{\lineheight{1.25}\smash{\begin{tabular}[t]{l}\footnotesize 0\end{tabular}}}}%
    \put(0.92195404,0.1111276){\makebox(0,0)[lt]{\lineheight{1.25}\smash{\begin{tabular}[t]{l}\footnotesize 0.2\end{tabular}}}}%
    \put(0.92195404,0.22184863){\makebox(0,0)[lt]{\lineheight{1.25}\smash{\begin{tabular}[t]{l}\footnotesize 0.4\end{tabular}}}}%
    \put(0.92195404,0.33256966){\makebox(0,0)[lt]{\lineheight{1.25}\smash{\begin{tabular}[t]{l}\footnotesize 0.6\end{tabular}}}}%
    \put(0.92195404,0.4432907){\makebox(0,0)[lt]{\lineheight{1.25}\smash{\begin{tabular}[t]{l}\footnotesize0.8\end{tabular}}}}%
    \put(0.92195404,0.55401173){\makebox(0,0)[lt]{\lineheight{1.25}\smash{\begin{tabular}[t]{l}\footnotesize1\end{tabular}}}}%
    \put(0.99999,0.11408757){\rotatebox{90}{\makebox(0,0)[lt]{\lineheight{1.25}\smash{\begin{tabular}[t]{l}Resource allocation\end{tabular}}}}}%
    \put(0,0){\includegraphics[width=\unitlength,page=2]{Ex2aN.pdf}}%
  \end{picture}%
\endgroup%
   
      \caption{k=1}
      \label{fig:FG10aA}
  \end{subfigure}
  ~
\begin{subfigure}[b]{0.45\linewidth}
        \def\svgwidth{1\textwidth}
%% Creator: Inkscape 1.0 (4035a4fb49, 2020-05-01), www.inkscape.org
%% PDF/EPS/PS + LaTeX output extension by Johan Engelen, 2010
%% Accompanies image file '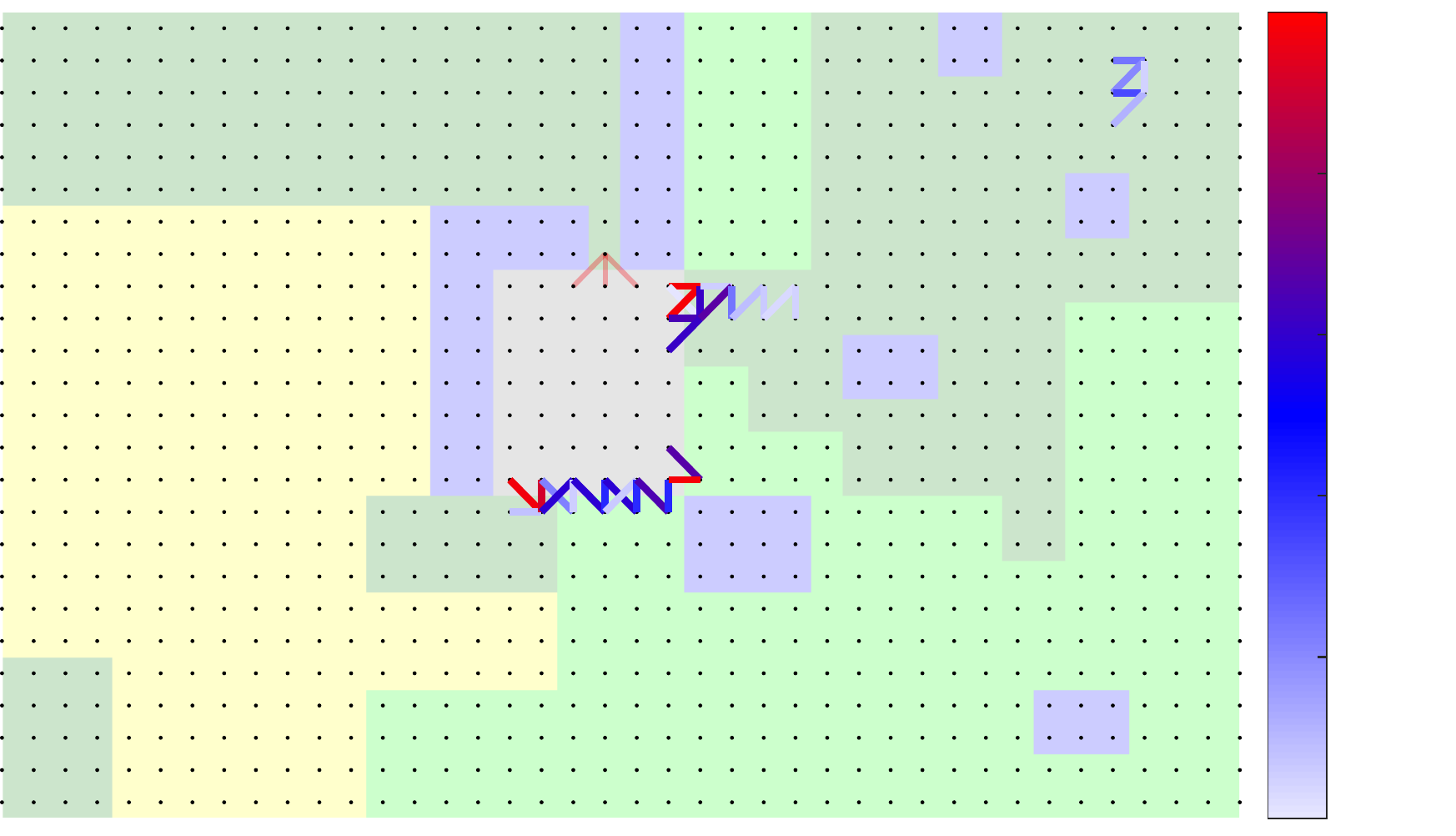' (pdf, eps, ps)
%%
%% To include the image in your LaTeX document, write
%%   \input{<filename>.pdf_tex}
%%  instead of
%%   \includegraphics{<filename>.pdf}
%% To scale the image, write
%%   \def\svgwidth{<desired width>}
%%   \input{<filename>.pdf_tex}
%%  instead of
%%   \includegraphics[width=<desired width>]{<filename>.pdf}
%%
%% Images with a different path to the parent latex file can
%% be accessed with the `import' package (which may need to be
%% installed) using
%%   \usepackage{import}
%% in the preamble, and then including the image with
%%   \import{<path to file>}{<filename>.pdf_tex}
%% Alternatively, one can specify
%%   \graphicspath{{<path to file>/}}
%% 
%% For more information, please see info/svg-inkscape on CTAN:
%%   http://tug.ctan.org/tex-archive/info/svg-inkscape
%%
\begingroup%
  \makeatletter%
  \providecommand\color[2][]{%
    \errmessage{(Inkscape) Color is used for the text in Inkscape, but the package 'color.sty' is not loaded}%
    \renewcommand\color[2][]{}%
  }%
  \providecommand\transparent[1]{%
    \errmessage{(Inkscape) Transparency is used (non-zero) for the text in Inkscape, but the package 'transparent.sty' is not loaded}%
    \renewcommand\transparent[1]{}%
  }%
  \providecommand\rotatebox[2]{#2}%
  \newcommand*\fsize{\dimexpr\f@size pt\relax}%
  \newcommand*\lineheight[1]{\fontsize{\fsize}{#1\fsize}\selectfont}%
  \ifx\svgwidth\undefined%
    \setlength{\unitlength}{502.61453247bp}%
    \ifx\svgscale\undefined%
      \relax%
    \else%
      \setlength{\unitlength}{\unitlength * \real{\svgscale}}%
    \fi%
  \else%
    \setlength{\unitlength}{\svgwidth}%
  \fi%
  \global\let\svgwidth\undefined%
  \global\let\svgscale\undefined%
  \makeatother%
  \begin{picture}(1,0.57360556)%
    \lineheight{1}%
    \setlength\tabcolsep{0pt}%
    \put(0,0){\includegraphics[width=\unitlength,page=1]{Ex2bN.pdf}}%
    \put(0.92195404,0.00040657){\makebox(0,0)[lt]{\lineheight{1.25}\smash{\begin{tabular}[t]{l}\footnotesize 0\end{tabular}}}}%
    \put(0.92195404,0.1111276){\makebox(0,0)[lt]{\lineheight{1.25}\smash{\begin{tabular}[t]{l}\footnotesize 0.2\end{tabular}}}}%
    \put(0.92195404,0.22184863){\makebox(0,0)[lt]{\lineheight{1.25}\smash{\begin{tabular}[t]{l}\footnotesize 0.4\end{tabular}}}}%
    \put(0.92195404,0.33256966){\makebox(0,0)[lt]{\lineheight{1.25}\smash{\begin{tabular}[t]{l}\footnotesize 0.6\end{tabular}}}}%
    \put(0.92195404,0.4432907){\makebox(0,0)[lt]{\lineheight{1.25}\smash{\begin{tabular}[t]{l}\footnotesize0.8\end{tabular}}}}%
    \put(0.92195404,0.55401173){\makebox(0,0)[lt]{\lineheight{1.25}\smash{\begin{tabular}[t]{l}\footnotesize1\end{tabular}}}}%
    \put(0.99999,0.11408757){\rotatebox{90}{\makebox(0,0)[lt]{\lineheight{1.25}\smash{\begin{tabular}[t]{l}Resource allocation\end{tabular}}}}}%
    \put(0,0){\includegraphics[width=\unitlength,page=2]{Ex2bN.pdf}}%
  \end{picture}%
\endgroup%
  
      \caption{k=2}
      \label{fig:FG10bA}
  \end{subfigure}
  \begin{subfigure}[b]{0.45\linewidth}
        \def\svgwidth{1\textwidth}
%% Creator: Inkscape 1.0 (4035a4fb49, 2020-05-01), www.inkscape.org
%% PDF/EPS/PS + LaTeX output extension by Johan Engelen, 2010
%% Accompanies image file '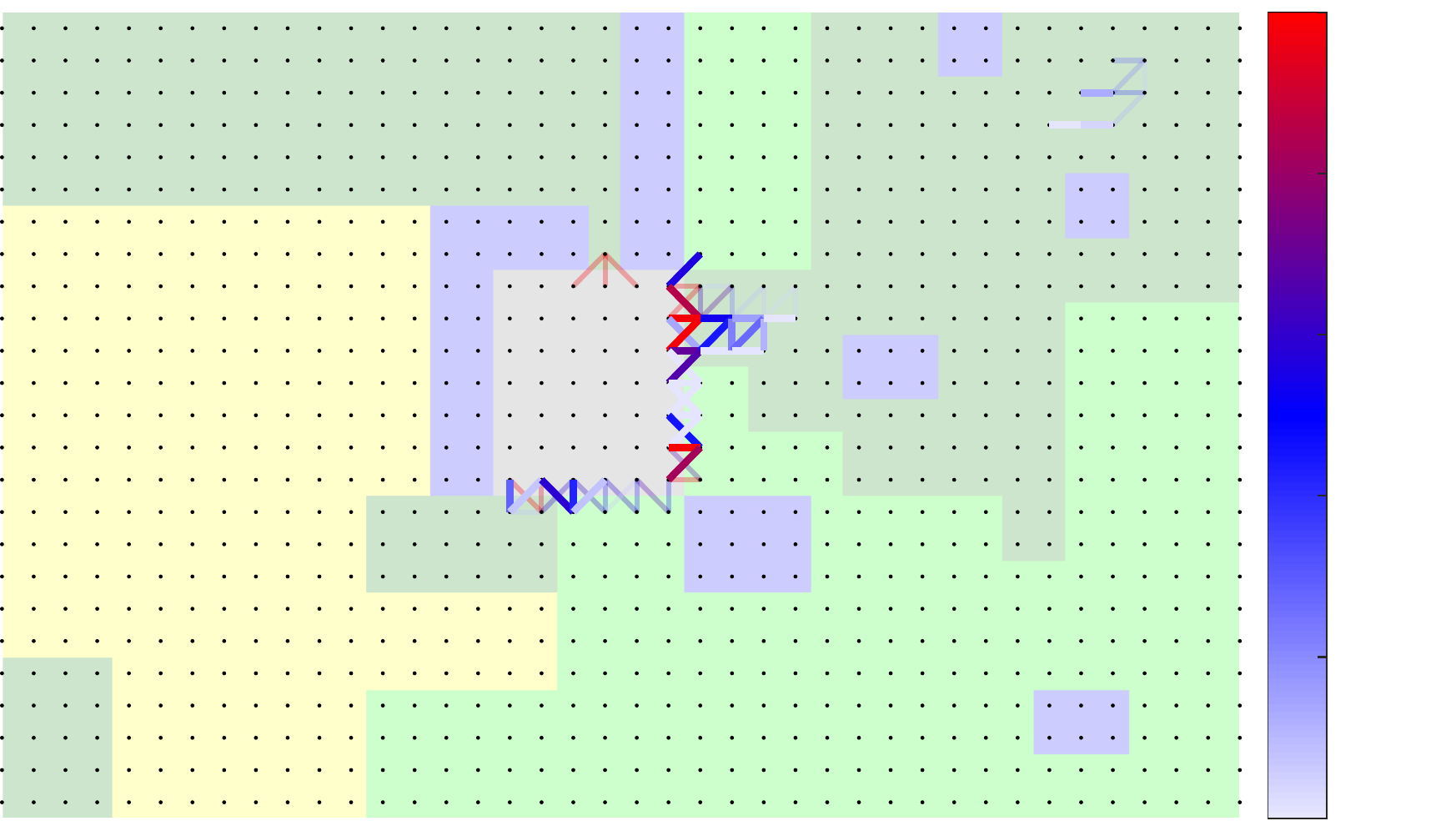' (pdf, eps, ps)
%%
%% To include the image in your LaTeX document, write
%%   \input{<filename>.pdf_tex}
%%  instead of
%%   \includegraphics{<filename>.pdf}
%% To scale the image, write
%%   \def\svgwidth{<desired width>}
%%   \input{<filename>.pdf_tex}
%%  instead of
%%   \includegraphics[width=<desired width>]{<filename>.pdf}
%%
%% Images with a different path to the parent latex file can
%% be accessed with the `import' package (which may need to be
%% installed) using
%%   \usepackage{import}
%% in the preamble, and then including the image with
%%   \import{<path to file>}{<filename>.pdf_tex}
%% Alternatively, one can specify
%%   \graphicspath{{<path to file>/}}
%% 
%% For more information, please see info/svg-inkscape on CTAN:
%%   http://tug.ctan.org/tex-archive/info/svg-inkscape
%%
\begingroup%
  \makeatletter%
  \providecommand\color[2][]{%
    \errmessage{(Inkscape) Color is used for the text in Inkscape, but the package 'color.sty' is not loaded}%
    \renewcommand\color[2][]{}%
  }%
  \providecommand\transparent[1]{%
    \errmessage{(Inkscape) Transparency is used (non-zero) for the text in Inkscape, but the package 'transparent.sty' is not loaded}%
    \renewcommand\transparent[1]{}%
  }%
  \providecommand\rotatebox[2]{#2}%
  \newcommand*\fsize{\dimexpr\f@size pt\relax}%
  \newcommand*\lineheight[1]{\fontsize{\fsize}{#1\fsize}\selectfont}%
  \ifx\svgwidth\undefined%
    \setlength{\unitlength}{502.61453247bp}%
    \ifx\svgscale\undefined%
      \relax%
    \else%
      \setlength{\unitlength}{\unitlength * \real{\svgscale}}%
    \fi%
  \else%
    \setlength{\unitlength}{\svgwidth}%
  \fi%
  \global\let\svgwidth\undefined%
  \global\let\svgscale\undefined%
  \makeatother%
  \begin{picture}(1,0.57360556)%
    \lineheight{1}%
    \setlength\tabcolsep{0pt}%
    \put(0,0){\includegraphics[width=\unitlength,page=1]{Ex2cN.pdf}}%
    \put(0.92195404,0.00040657){\makebox(0,0)[lt]{\lineheight{1.25}\smash{\begin{tabular}[t]{l}\footnotesize 0\end{tabular}}}}%
    \put(0.92195404,0.1111276){\makebox(0,0)[lt]{\lineheight{1.25}\smash{\begin{tabular}[t]{l}\footnotesize 0.2\end{tabular}}}}%
    \put(0.92195404,0.22184863){\makebox(0,0)[lt]{\lineheight{1.25}\smash{\begin{tabular}[t]{l}\footnotesize 0.4\end{tabular}}}}%
    \put(0.92195404,0.33256966){\makebox(0,0)[lt]{\lineheight{1.25}\smash{\begin{tabular}[t]{l}\footnotesize 0.6\end{tabular}}}}%
    \put(0.92195404,0.4432907){\makebox(0,0)[lt]{\lineheight{1.25}\smash{\begin{tabular}[t]{l}\footnotesize0.8\end{tabular}}}}%
    \put(0.92195404,0.55401173){\makebox(0,0)[lt]{\lineheight{1.25}\smash{\begin{tabular}[t]{l}\footnotesize1\end{tabular}}}}%
    \put(0.99999,0.11408757){\rotatebox{90}{\makebox(0,0)[lt]{\lineheight{1.25}\smash{\begin{tabular}[t]{l}Resource allocation\end{tabular}}}}}%
    \put(0,0){\includegraphics[width=\unitlength,page=2]{Ex2cN.pdf}}%
  \end{picture}%
\endgroup%
  
      \caption{k=3}
      \label{fig:FG10cA}
  \end{subfigure}
  ~
     \begin{subfigure}[b]{0.45\linewidth}
        \def\svgwidth{1\textwidth}
%% Creator: Inkscape 1.0 (4035a4fb49, 2020-05-01), www.inkscape.org
%% PDF/EPS/PS + LaTeX output extension by Johan Engelen, 2010
%% Accompanies image file '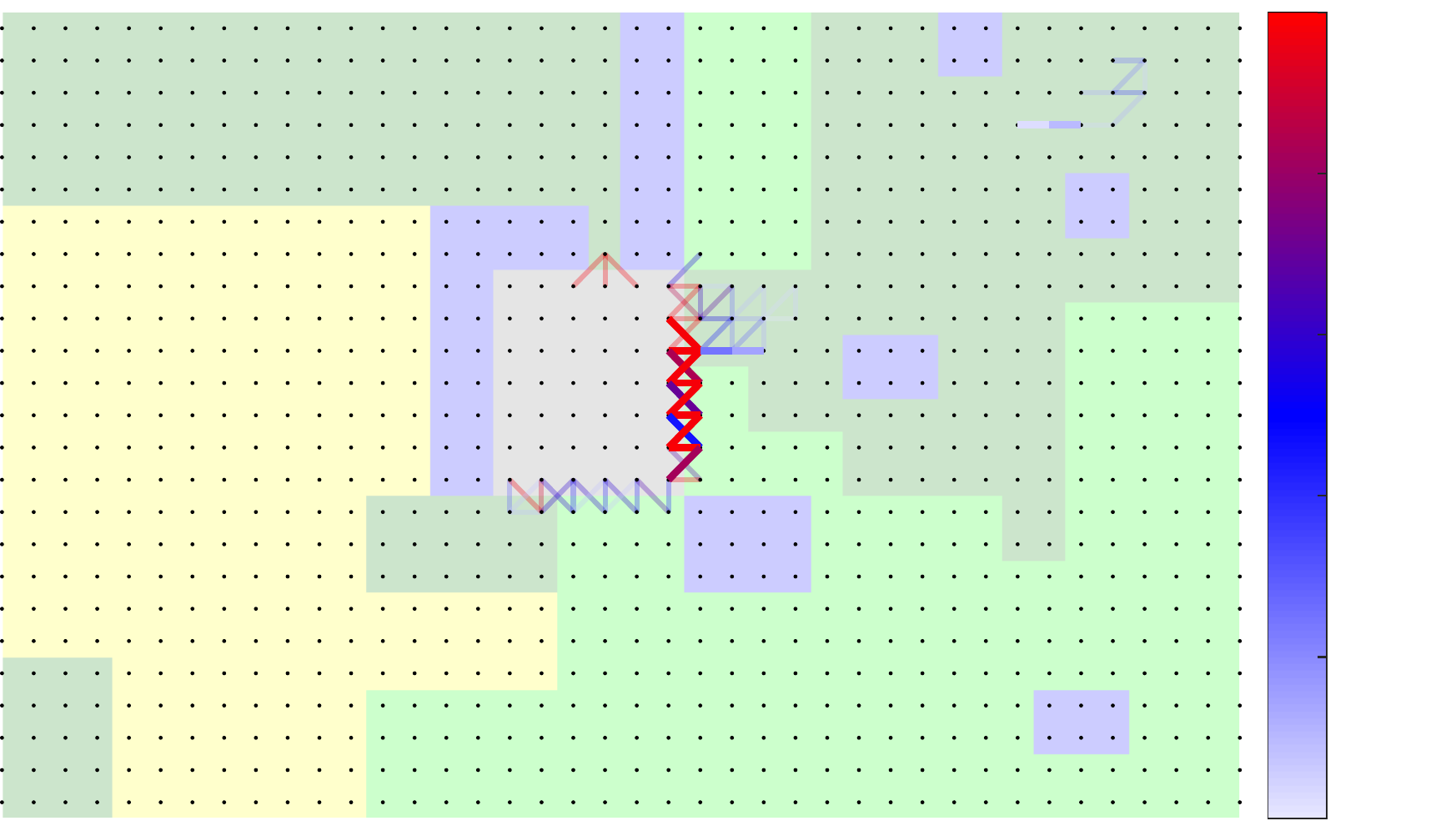' (pdf, eps, ps)
%%
%% To include the image in your LaTeX document, write
%%   \input{<filename>.pdf_tex}
%%  instead of
%%   \includegraphics{<filename>.pdf}
%% To scale the image, write
%%   \def\svgwidth{<desired width>}
%%   \input{<filename>.pdf_tex}
%%  instead of
%%   \includegraphics[width=<desired width>]{<filename>.pdf}
%%
%% Images with a different path to the parent latex file can
%% be accessed with the `import' package (which may need to be
%% installed) using
%%   \usepackage{import}
%% in the preamble, and then including the image with
%%   \import{<path to file>}{<filename>.pdf_tex}
%% Alternatively, one can specify
%%   \graphicspath{{<path to file>/}}
%% 
%% For more information, please see info/svg-inkscape on CTAN:
%%   http://tug.ctan.org/tex-archive/info/svg-inkscape
%%
\begingroup%
  \makeatletter%
  \providecommand\color[2][]{%
    \errmessage{(Inkscape) Color is used for the text in Inkscape, but the package 'color.sty' is not loaded}%
    \renewcommand\color[2][]{}%
  }%
  \providecommand\transparent[1]{%
    \errmessage{(Inkscape) Transparency is used (non-zero) for the text in Inkscape, but the package 'transparent.sty' is not loaded}%
    \renewcommand\transparent[1]{}%
  }%
  \providecommand\rotatebox[2]{#2}%
  \newcommand*\fsize{\dimexpr\f@size pt\relax}%
  \newcommand*\lineheight[1]{\fontsize{\fsize}{#1\fsize}\selectfont}%
  \ifx\svgwidth\undefined%
    \setlength{\unitlength}{502.61453247bp}%
    \ifx\svgscale\undefined%
      \relax%
    \else%
      \setlength{\unitlength}{\unitlength * \real{\svgscale}}%
    \fi%
  \else%
    \setlength{\unitlength}{\svgwidth}%
  \fi%
  \global\let\svgwidth\undefined%
  \global\let\svgscale\undefined%
  \makeatother%
  \begin{picture}(1,0.57360556)%
    \lineheight{1}%
    \setlength\tabcolsep{0pt}%
    \put(0,0){\includegraphics[width=\unitlength,page=1]{Ex2dN.pdf}}%
    \put(0.92195404,0.00040657){\makebox(0,0)[lt]{\lineheight{1.25}\smash{\begin{tabular}[t]{l}\footnotesize 0\end{tabular}}}}%
    \put(0.92195404,0.1111276){\makebox(0,0)[lt]{\lineheight{1.25}\smash{\begin{tabular}[t]{l}\footnotesize 0.2\end{tabular}}}}%
    \put(0.92195404,0.22184863){\makebox(0,0)[lt]{\lineheight{1.25}\smash{\begin{tabular}[t]{l}\footnotesize 0.4\end{tabular}}}}%
    \put(0.92195404,0.33256966){\makebox(0,0)[lt]{\lineheight{1.25}\smash{\begin{tabular}[t]{l}\footnotesize 0.6\end{tabular}}}}%
    \put(0.92195404,0.4432907){\makebox(0,0)[lt]{\lineheight{1.25}\smash{\begin{tabular}[t]{l}\footnotesize0.8\end{tabular}}}}%
    \put(0.92195404,0.55401173){\makebox(0,0)[lt]{\lineheight{1.25}\smash{\begin{tabular}[t]{l}\footnotesize1\end{tabular}}}}%
    \put(0.99999,0.11408757){\rotatebox{90}{\makebox(0,0)[lt]{\lineheight{1.25}\smash{\begin{tabular}[t]{l}Resource allocation\end{tabular}}}}}%
    \put(0,0){\includegraphics[width=\unitlength,page=2]{Ex2dN.pdf}}%
  \end{picture}%
\endgroup%
  
\caption{k=K}
      \label{fig:FG10dA}
  \end{subfigure}
      \caption{Resource allocation for a 4-stage approach with $\Gamma^k=10$. Resources shown are total resources up to time $k$ where new resources added have a thicker line width.}     
\label{fig:FG10A}
\end{figure*}

To investigate scalability of the multi-stage approach, we recorded solver time for solving the problem with time steps up to $K=10$ and the results are displayed in Fig. \ref{fig:runK}. The problems were solved with MOSEK v9.2 on an Intel i7, 2.6GHz, 8GB RAM. It can be seen that the MOSEK solver time goes up linearly per time step, which suggests the method is scalable to large number of time steps. Scalability with respect to number of nodes was investigated for a similar (but single-stage) method in \cite{LCSS2021} and also found to be linear.  

It is worth drawing attention to the fact that in this problem there are 3357 edges that can either have resources allocated to them or not. Choosing 10 edges for resource allocation per each time step, over a horizon of 10 time steps, results in a total of approximately $8.7 \times 10^{286}$ possible resource allocation patterns. Despite such a remarkable problem size (in its combinatorial form), our convex formulation finds good solutions in less then a minute on a standard laptop. 

%It should be noted, that the current problem size of $n=1000$ implies close to $4000$ edges for which the decision had to be made to apply resources or not. Hence, approximately $2^{4000}$ decision variables, which is an enormous problem size, was still solved very efficiently. 
%Some comment; enormous problem size, with inf decision variables, solved very efficiently

%\begin{table*}
%\caption{MOSEK solver time for the proposed method in seconds for different number of time steps.}
%\label{Tab:Run}
%\begin{center}
%\begin{tabular}{c|c c c c c c c c c c}
% $K$  & 1 & 2 & 3 & 4 &5 &6 &7 &8 & 9 & 10\\
%\hline 
%Solver time [s] &  4.08 &  7.73 & 9.75  &  12.76 & 19.31 & 24.19 & 26.61 & 28.66 & 33.98 & 38.46 \\
%\end{tabular}
%\end{center}
%\end{table*}

\begin{figure}
\centering
   \def\svgwidth{0.85\linewidth}
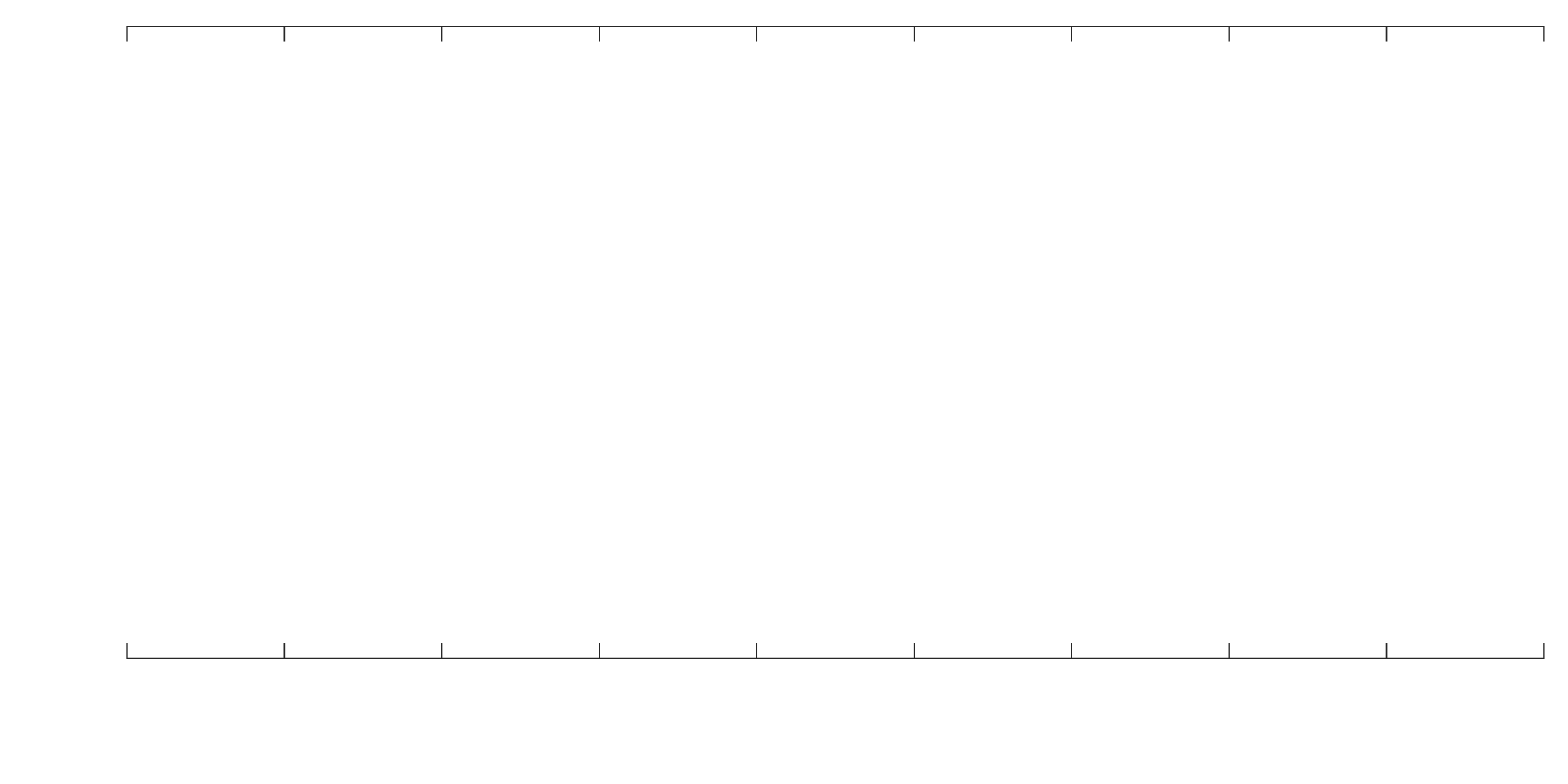
\caption{MOSEK solver time in seconds for solving the wildfire example with time steps up to $K=10$.}
\label{fig:runK}
\end{figure}

Finally, to illustrate the effect of the estimated outbreak probability $\hat x_i^1$, we consider a scenario with a precise knowledge of the outbreak location, as visualized in Fig. \ref{fig:P1LL}. Keeping all other parameters the same, we obtain Fig. \ref{fig:FGL}. It can be seen that in the first time steps $k=1,2$ the outbreak area is aggressively suppressed, before resources are allocated to the path the fire would take to the city. 
%Next, to illustrate the effect of the outbreak rate, we solved \autoref{P3} with $\hat x_i(0)=1$ at the outbreak node and its eight surrounding nodes, and $\hat x_i(0)=0$ for all other nodes, illustrated in \autoref{fig:7L}. This corresponds to precise knowledge of the outbreak location. It can be seen in \autoref{fig:P1L} that the outbreak area is aggressively suppressed, while some resources are distributed along the path a fire would take to the city.  It is interesting to contrast \autoref{fig:P1Ours} and \autoref{fig:P1L}, the only difference being the outbreak probability $\hat x(0)$. If the outbreak location is known precisely (\autoref{fig:P1L}) then resources are allocated to suppress the outbreak; if not (\autoref{fig:P1Ours}) then resources are used to protect the high-cost city nodes.

\begin{figure}
\centering
   \def\svgwidth{0.85\linewidth}
%% Creator: Inkscape 1.0 (4035a4fb49, 2020-05-01), www.inkscape.org
%% PDF/EPS/PS + LaTeX output extension by Johan Engelen, 2010
%% Accompanies image file '' (pdf, eps, ps)
%%
%% To include the image in your LaTeX document, write
%%   \input{<filename>.pdf_tex}
%%  instead of
%%   \includegraphics{<filename>.pdf}
%% To scale the image, write
%%   \def\svgwidth{<desired width>}
%%   \input{<filename>.pdf_tex}
%%  instead of
%%   \includegraphics[width=<desired width>]{<filename>.pdf}
%%
%% Images with a different path to the parent latex file can
%% be accessed with the `import' package (which may need to be
%% installed) using
%%   \usepackage{import}
%% in the preamble, and then including the image with
%%   \import{<path to file>}{<filename>.pdf_tex}
%% Alternatively, one can specify
%%   \graphicspath{{<path to file>/}}
%% 
%% For more information, please see info/svg-inkscape on CTAN:
%%   http://tug.ctan.org/tex-archive/info/svg-inkscape
%%
\begingroup%
  \makeatletter%
  \providecommand\color[2][]{%
    \errmessage{(Inkscape) Color is used for the text in Inkscape, but the package 'color.sty' is not loaded}%
    \renewcommand\color[2][]{}%
  }%
  \providecommand\transparent[1]{%
    \errmessage{(Inkscape) Transparency is used (non-zero) for the text in Inkscape, but the package 'transparent.sty' is not loaded}%
    \renewcommand\transparent[1]{}%
  }%
  \providecommand\rotatebox[2]{#2}%
  \newcommand*\fsize{\dimexpr\f@size pt\relax}%
  \newcommand*\lineheight[1]{\fontsize{\fsize}{#1\fsize}\selectfont}%
  \ifx\svgwidth\undefined%
    \setlength{\unitlength}{512.40431213bp}%
    \ifx\svgscale\undefined%
      \relax%
    \else%
      \setlength{\unitlength}{\unitlength * \real{\svgscale}}%
    \fi%
  \else%
    \setlength{\unitlength}{\svgwidth}%
  \fi%
  \global\let\svgwidth\undefined%
  \global\let\svgscale\undefined%
  \makeatother%
  \begin{picture}(1,0.55093699)%
    \lineheight{1}%
    \setlength\tabcolsep{0pt}%
    \put(0,0){\includegraphics[width=\unitlength,page=1]{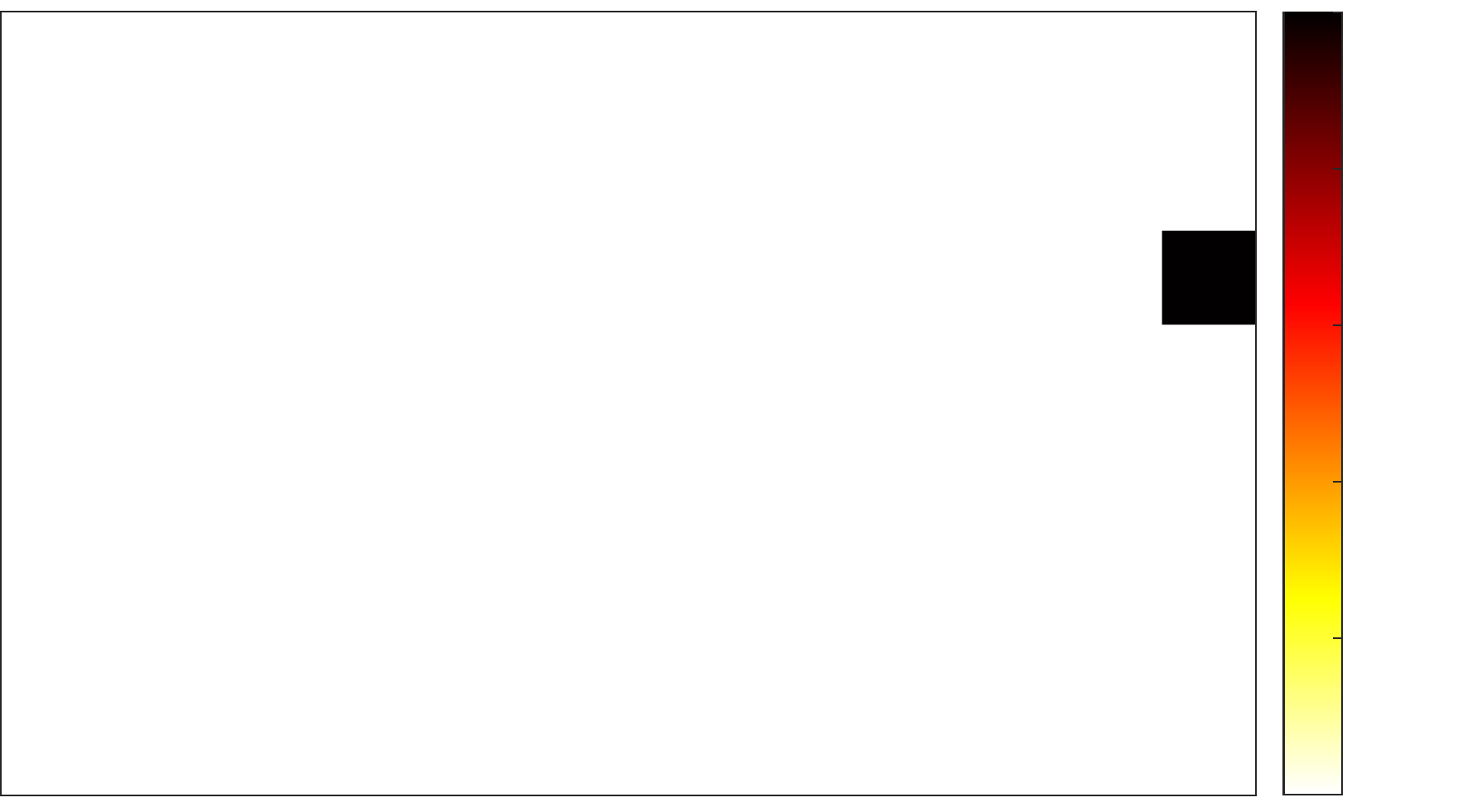}}%
    \put(0.92641682,0.0003988){\makebox(0,0)[lt]{\lineheight{1.25}\smash{\begin{tabular}[t]{l}\scriptsize 0\end{tabular}}}}%
    \put(0.92641682,0.10666254){\makebox(0,0)[lt]{\lineheight{1.25}\smash{\begin{tabular}[t]{l}\scriptsize 0.2\end{tabular}}}}%
    \put(0.92641682,0.21292628){\makebox(0,0)[lt]{\lineheight{1.25}\smash{\begin{tabular}[t]{l}\scriptsize 0.4\end{tabular}}}}%
    \put(0.92641682,0.31919003){\makebox(0,0)[lt]{\lineheight{1.25}\smash{\begin{tabular}[t]{l}\scriptsize 0.6\end{tabular}}}}%
    \put(0.92641682,0.42545377){\makebox(0,0)[lt]{\lineheight{1.25}\smash{\begin{tabular}[t]{l}\scriptsize 0.8\end{tabular}}}}%
    \put(0.92641682,0.53171751){\makebox(0,0)[lt]{\lineheight{1.25}\smash{\begin{tabular}[t]{l}\scriptsize 1\end{tabular}}}}%
    \put(0.99960122,0.09067852){\rotatebox{90}{\makebox(0,0)[lt]{\lineheight{1.25}\smash{\begin{tabular}[t]{l}\small Outbreak Probability\end{tabular}}}}}%
    \put(0,0){\includegraphics[width=\unitlength,page=2]{OmlijnL.pdf}}%
  \end{picture}%
\endgroup%

\caption{Outbreak probability for a scenario with precisely-known initial fire outbreak location.}
\label{fig:P1LL}
\end{figure}

\begin{figure*}
\centering
\begin{subfigure}[b]{0.45\linewidth}
        \def\svgwidth{1\textwidth}
%% Creator: Inkscape 1.0 (4035a4fb49, 2020-05-01), www.inkscape.org
%% PDF/EPS/PS + LaTeX output extension by Johan Engelen, 2010
%% Accompanies image file '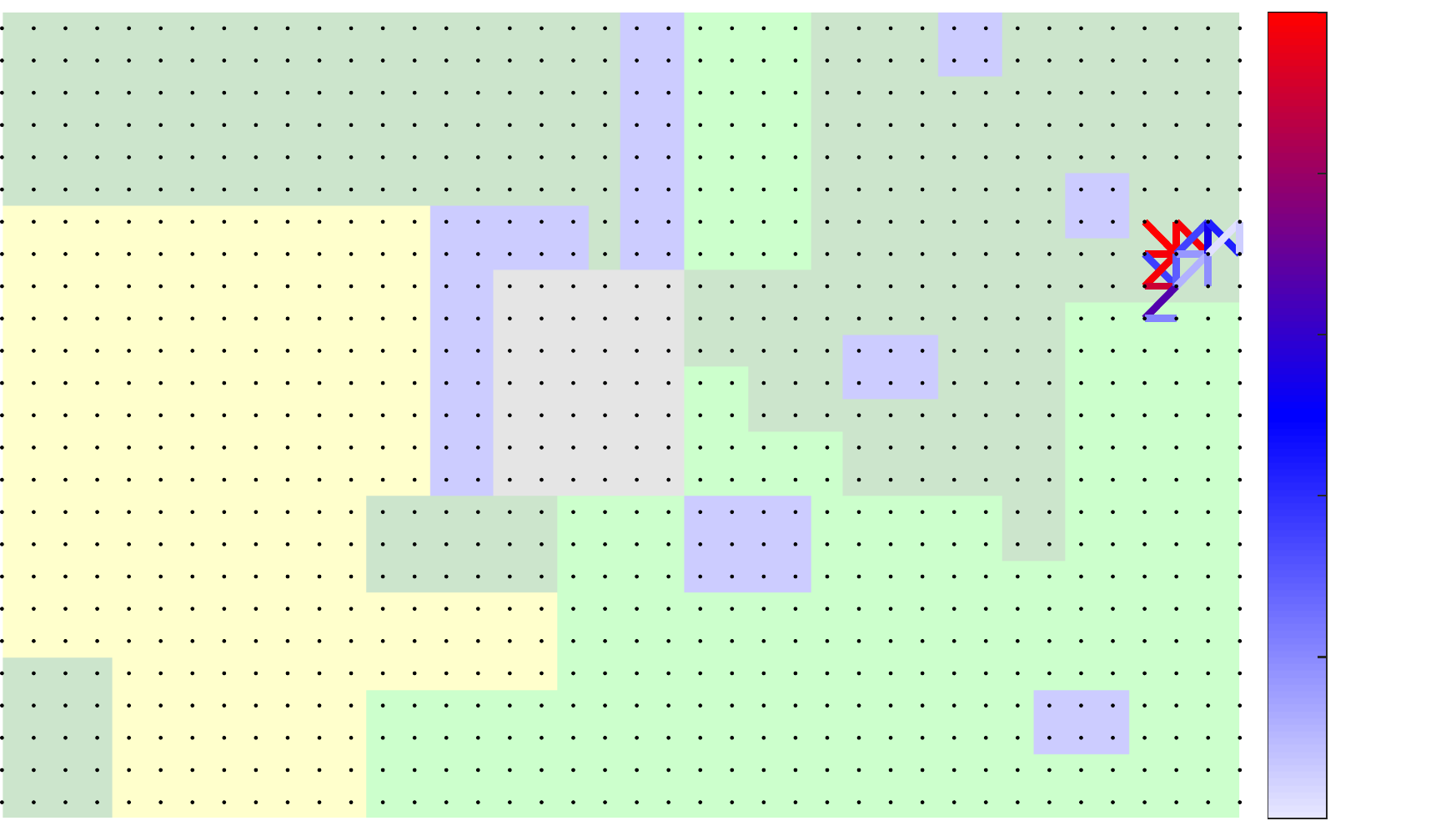' (pdf, eps, ps)
%%
%% To include the image in your LaTeX document, write
%%   \input{<filename>.pdf_tex}
%%  instead of
%%   \includegraphics{<filename>.pdf}
%% To scale the image, write
%%   \def\svgwidth{<desired width>}
%%   \input{<filename>.pdf_tex}
%%  instead of
%%   \includegraphics[width=<desired width>]{<filename>.pdf}
%%
%% Images with a different path to the parent latex file can
%% be accessed with the `import' package (which may need to be
%% installed) using
%%   \usepackage{import}
%% in the preamble, and then including the image with
%%   \import{<path to file>}{<filename>.pdf_tex}
%% Alternatively, one can specify
%%   \graphicspath{{<path to file>/}}
%% 
%% For more information, please see info/svg-inkscape on CTAN:
%%   http://tug.ctan.org/tex-archive/info/svg-inkscape
%%
\begingroup%
  \makeatletter%
  \providecommand\color[2][]{%
    \errmessage{(Inkscape) Color is used for the text in Inkscape, but the package 'color.sty' is not loaded}%
    \renewcommand\color[2][]{}%
  }%
  \providecommand\transparent[1]{%
    \errmessage{(Inkscape) Transparency is used (non-zero) for the text in Inkscape, but the package 'transparent.sty' is not loaded}%
    \renewcommand\transparent[1]{}%
  }%
  \providecommand\rotatebox[2]{#2}%
  \newcommand*\fsize{\dimexpr\f@size pt\relax}%
  \newcommand*\lineheight[1]{\fontsize{\fsize}{#1\fsize}\selectfont}%
  \ifx\svgwidth\undefined%
    \setlength{\unitlength}{502.61453247bp}%
    \ifx\svgscale\undefined%
      \relax%
    \else%
      \setlength{\unitlength}{\unitlength * \real{\svgscale}}%
    \fi%
  \else%
    \setlength{\unitlength}{\svgwidth}%
  \fi%
  \global\let\svgwidth\undefined%
  \global\let\svgscale\undefined%
  \makeatother%
  \begin{picture}(1,0.57360556)%
    \lineheight{1}%
    \setlength\tabcolsep{0pt}%
    \put(0,0){\includegraphics[width=\unitlength,page=1]{Ex2LaN10.pdf}}%
    \put(0.92195404,0.00040657){\makebox(0,0)[lt]{\lineheight{1.25}\smash{\begin{tabular}[t]{l}\footnotesize 0\end{tabular}}}}%
    \put(0.92195404,0.1111276){\makebox(0,0)[lt]{\lineheight{1.25}\smash{\begin{tabular}[t]{l}\footnotesize 0.2\end{tabular}}}}%
    \put(0.92195404,0.22184863){\makebox(0,0)[lt]{\lineheight{1.25}\smash{\begin{tabular}[t]{l}\footnotesize 0.4\end{tabular}}}}%
    \put(0.92195404,0.33256966){\makebox(0,0)[lt]{\lineheight{1.25}\smash{\begin{tabular}[t]{l}\footnotesize 0.6\end{tabular}}}}%
    \put(0.92195404,0.4432907){\makebox(0,0)[lt]{\lineheight{1.25}\smash{\begin{tabular}[t]{l}\footnotesize0.8\end{tabular}}}}%
    \put(0.92195404,0.55401173){\makebox(0,0)[lt]{\lineheight{1.25}\smash{\begin{tabular}[t]{l}\footnotesize1\end{tabular}}}}%
    \put(0.99999,0.11408757){\rotatebox{90}{\makebox(0,0)[lt]{\lineheight{1.25}\smash{\begin{tabular}[t]{l}Resource allocation\end{tabular}}}}}%
    \put(0,0){\includegraphics[width=\unitlength,page=2]{Ex2LaN10.pdf}}%
  \end{picture}%
\endgroup%
  
      \caption{k=1}
      \label{fig:FNAa}
  \end{subfigure}
  ~ %add desired spacing between images, e. g. ~, \quad, \qquad, \hfill etc. 
    %(or a blank line to force the subfigure onto a new line)
     \begin{subfigure}[b]{0.45\linewidth}
        \def\svgwidth{1\textwidth}
%% Creator: Inkscape 1.0 (4035a4fb49, 2020-05-01), www.inkscape.org
%% PDF/EPS/PS + LaTeX output extension by Johan Engelen, 2010
%% Accompanies image file '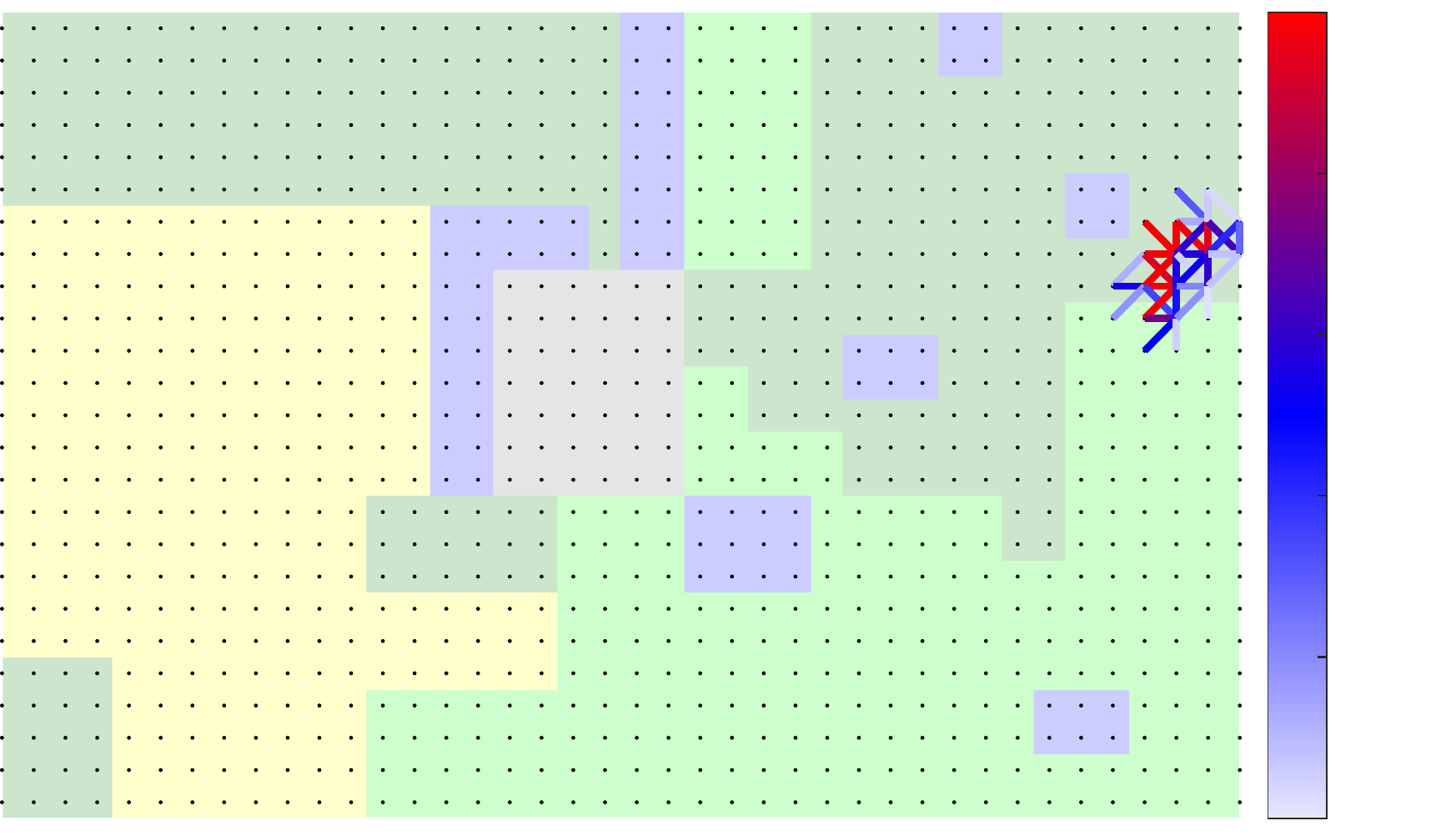' (pdf, eps, ps)
%%
%% To include the image in your LaTeX document, write
%%   \input{<filename>.pdf_tex}
%%  instead of
%%   \includegraphics{<filename>.pdf}
%% To scale the image, write
%%   \def\svgwidth{<desired width>}
%%   \input{<filename>.pdf_tex}
%%  instead of
%%   \includegraphics[width=<desired width>]{<filename>.pdf}
%%
%% Images with a different path to the parent latex file can
%% be accessed with the `import' package (which may need to be
%% installed) using
%%   \usepackage{import}
%% in the preamble, and then including the image with
%%   \import{<path to file>}{<filename>.pdf_tex}
%% Alternatively, one can specify
%%   \graphicspath{{<path to file>/}}
%% 
%% For more information, please see info/svg-inkscape on CTAN:
%%   http://tug.ctan.org/tex-archive/info/svg-inkscape
%%
\begingroup%
  \makeatletter%
  \providecommand\color[2][]{%
    \errmessage{(Inkscape) Color is used for the text in Inkscape, but the package 'color.sty' is not loaded}%
    \renewcommand\color[2][]{}%
  }%
  \providecommand\transparent[1]{%
    \errmessage{(Inkscape) Transparency is used (non-zero) for the text in Inkscape, but the package 'transparent.sty' is not loaded}%
    \renewcommand\transparent[1]{}%
  }%
  \providecommand\rotatebox[2]{#2}%
  \newcommand*\fsize{\dimexpr\f@size pt\relax}%
  \newcommand*\lineheight[1]{\fontsize{\fsize}{#1\fsize}\selectfont}%
  \ifx\svgwidth\undefined%
    \setlength{\unitlength}{502.61453247bp}%
    \ifx\svgscale\undefined%
      \relax%
    \else%
      \setlength{\unitlength}{\unitlength * \real{\svgscale}}%
    \fi%
  \else%
    \setlength{\unitlength}{\svgwidth}%
  \fi%
  \global\let\svgwidth\undefined%
  \global\let\svgscale\undefined%
  \makeatother%
  \begin{picture}(1,0.57360556)%
    \lineheight{1}%
    \setlength\tabcolsep{0pt}%
    \put(0,0){\includegraphics[width=\unitlength,page=1]{Ex2LbN10.pdf}}%
    \put(0.92195404,0.00040657){\makebox(0,0)[lt]{\lineheight{1.25}\smash{\begin{tabular}[t]{l}\footnotesize 0\end{tabular}}}}%
    \put(0.92195404,0.1111276){\makebox(0,0)[lt]{\lineheight{1.25}\smash{\begin{tabular}[t]{l}\footnotesize 0.2\end{tabular}}}}%
    \put(0.92195404,0.22184863){\makebox(0,0)[lt]{\lineheight{1.25}\smash{\begin{tabular}[t]{l}\footnotesize 0.4\end{tabular}}}}%
    \put(0.92195404,0.33256966){\makebox(0,0)[lt]{\lineheight{1.25}\smash{\begin{tabular}[t]{l}\footnotesize 0.6\end{tabular}}}}%
    \put(0.92195404,0.4432907){\makebox(0,0)[lt]{\lineheight{1.25}\smash{\begin{tabular}[t]{l}\footnotesize0.8\end{tabular}}}}%
    \put(0.92195404,0.55401173){\makebox(0,0)[lt]{\lineheight{1.25}\smash{\begin{tabular}[t]{l}\footnotesize1\end{tabular}}}}%
    \put(0.99999,0.11408757){\rotatebox{90}{\makebox(0,0)[lt]{\lineheight{1.25}\smash{\begin{tabular}[t]{l}Resource allocation\end{tabular}}}}}%
    \put(0,0){\includegraphics[width=\unitlength,page=2]{Ex2LbN10.pdf}}%
  \end{picture}%
\endgroup%
 
\caption{k=2}
      \label{fig:FNAb}
  \end{subfigure}
  \begin{subfigure}[b]{0.45\linewidth}
        \def\svgwidth{1\textwidth}
%% Creator: Inkscape 1.0 (4035a4fb49, 2020-05-01), www.inkscape.org
%% PDF/EPS/PS + LaTeX output extension by Johan Engelen, 2010
%% Accompanies image file '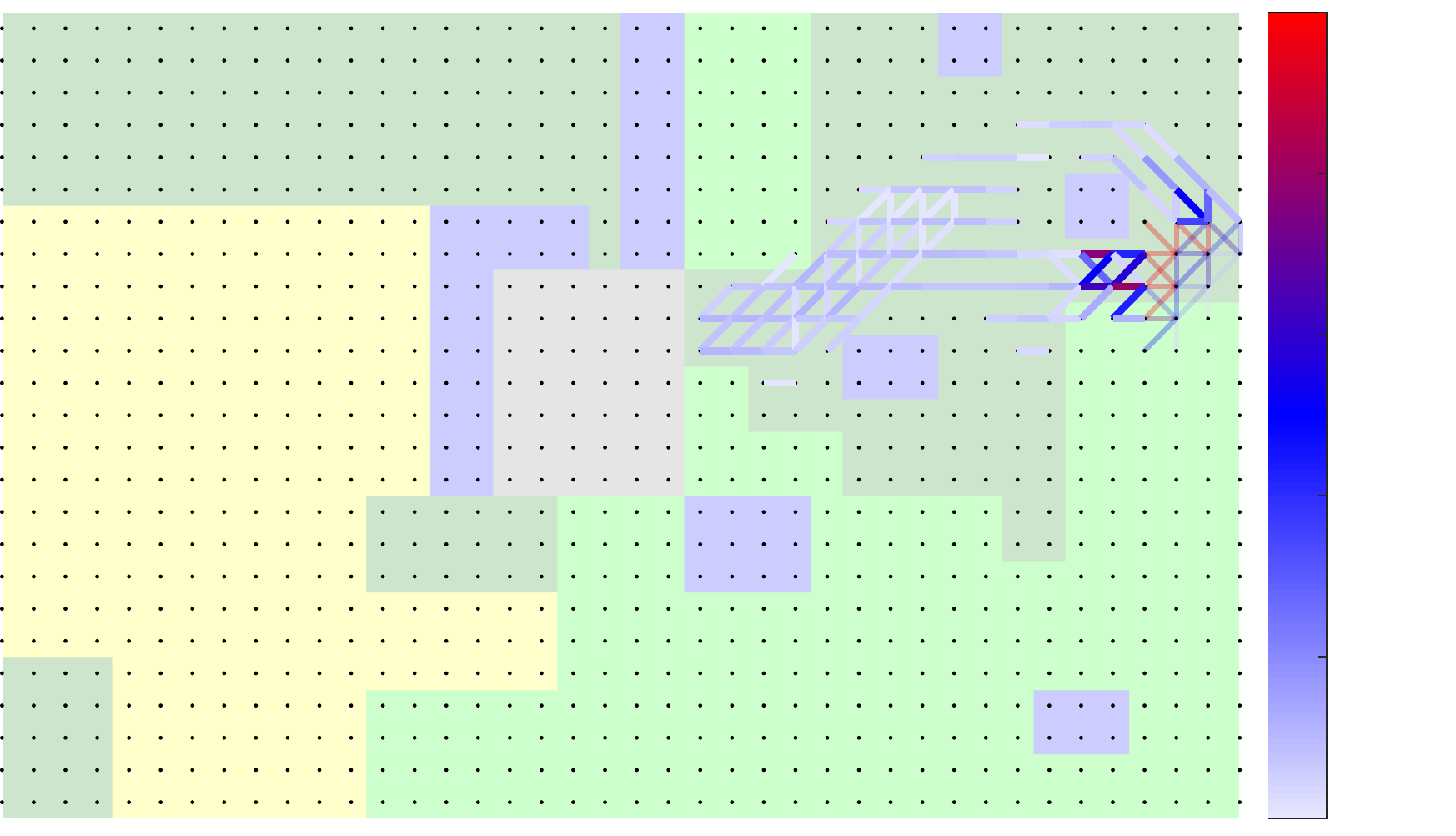' (pdf, eps, ps)
%%
%% To include the image in your LaTeX document, write
%%   \input{<filename>.pdf_tex}
%%  instead of
%%   \includegraphics{<filename>.pdf}
%% To scale the image, write
%%   \def\svgwidth{<desired width>}
%%   \input{<filename>.pdf_tex}
%%  instead of
%%   \includegraphics[width=<desired width>]{<filename>.pdf}
%%
%% Images with a different path to the parent latex file can
%% be accessed with the `import' package (which may need to be
%% installed) using
%%   \usepackage{import}
%% in the preamble, and then including the image with
%%   \import{<path to file>}{<filename>.pdf_tex}
%% Alternatively, one can specify
%%   \graphicspath{{<path to file>/}}
%% 
%% For more information, please see info/svg-inkscape on CTAN:
%%   http://tug.ctan.org/tex-archive/info/svg-inkscape
%%
\begingroup%
  \makeatletter%
  \providecommand\color[2][]{%
    \errmessage{(Inkscape) Color is used for the text in Inkscape, but the package 'color.sty' is not loaded}%
    \renewcommand\color[2][]{}%
  }%
  \providecommand\transparent[1]{%
    \errmessage{(Inkscape) Transparency is used (non-zero) for the text in Inkscape, but the package 'transparent.sty' is not loaded}%
    \renewcommand\transparent[1]{}%
  }%
  \providecommand\rotatebox[2]{#2}%
  \newcommand*\fsize{\dimexpr\f@size pt\relax}%
  \newcommand*\lineheight[1]{\fontsize{\fsize}{#1\fsize}\selectfont}%
  \ifx\svgwidth\undefined%
    \setlength{\unitlength}{502.61453247bp}%
    \ifx\svgscale\undefined%
      \relax%
    \else%
      \setlength{\unitlength}{\unitlength * \real{\svgscale}}%
    \fi%
  \else%
    \setlength{\unitlength}{\svgwidth}%
  \fi%
  \global\let\svgwidth\undefined%
  \global\let\svgscale\undefined%
  \makeatother%
  \begin{picture}(1,0.57360556)%
    \lineheight{1}%
    \setlength\tabcolsep{0pt}%
    \put(0,0){\includegraphics[width=\unitlength,page=1]{Ex2LcN10.pdf}}%
    \put(0.92195404,0.00040657){\makebox(0,0)[lt]{\lineheight{1.25}\smash{\begin{tabular}[t]{l}\footnotesize 0\end{tabular}}}}%
    \put(0.92195404,0.1111276){\makebox(0,0)[lt]{\lineheight{1.25}\smash{\begin{tabular}[t]{l}\footnotesize 0.2\end{tabular}}}}%
    \put(0.92195404,0.22184863){\makebox(0,0)[lt]{\lineheight{1.25}\smash{\begin{tabular}[t]{l}\footnotesize 0.4\end{tabular}}}}%
    \put(0.92195404,0.33256966){\makebox(0,0)[lt]{\lineheight{1.25}\smash{\begin{tabular}[t]{l}\footnotesize 0.6\end{tabular}}}}%
    \put(0.92195404,0.4432907){\makebox(0,0)[lt]{\lineheight{1.25}\smash{\begin{tabular}[t]{l}\footnotesize0.8\end{tabular}}}}%
    \put(0.92195404,0.55401173){\makebox(0,0)[lt]{\lineheight{1.25}\smash{\begin{tabular}[t]{l}\footnotesize1\end{tabular}}}}%
    \put(0.99999,0.11408757){\rotatebox{90}{\makebox(0,0)[lt]{\lineheight{1.25}\smash{\begin{tabular}[t]{l}Resource allocation\end{tabular}}}}}%
    \put(0,0){\includegraphics[width=\unitlength,page=2]{Ex2LcN10.pdf}}%
  \end{picture}%
\endgroup%
 
      \caption{k=3}
      \label{fig:FNAc}
  \end{subfigure}
  ~ %add desired spacing between images, e. g. ~, \quad, \qquad, \hfill etc. 
    %(or a blank line to force the subfigure onto a new line)
     \begin{subfigure}[b]{0.45\linewidth}
        \def\svgwidth{1\textwidth}
%% Creator: Inkscape 1.0 (4035a4fb49, 2020-05-01), www.inkscape.org
%% PDF/EPS/PS + LaTeX output extension by Johan Engelen, 2010
%% Accompanies image file '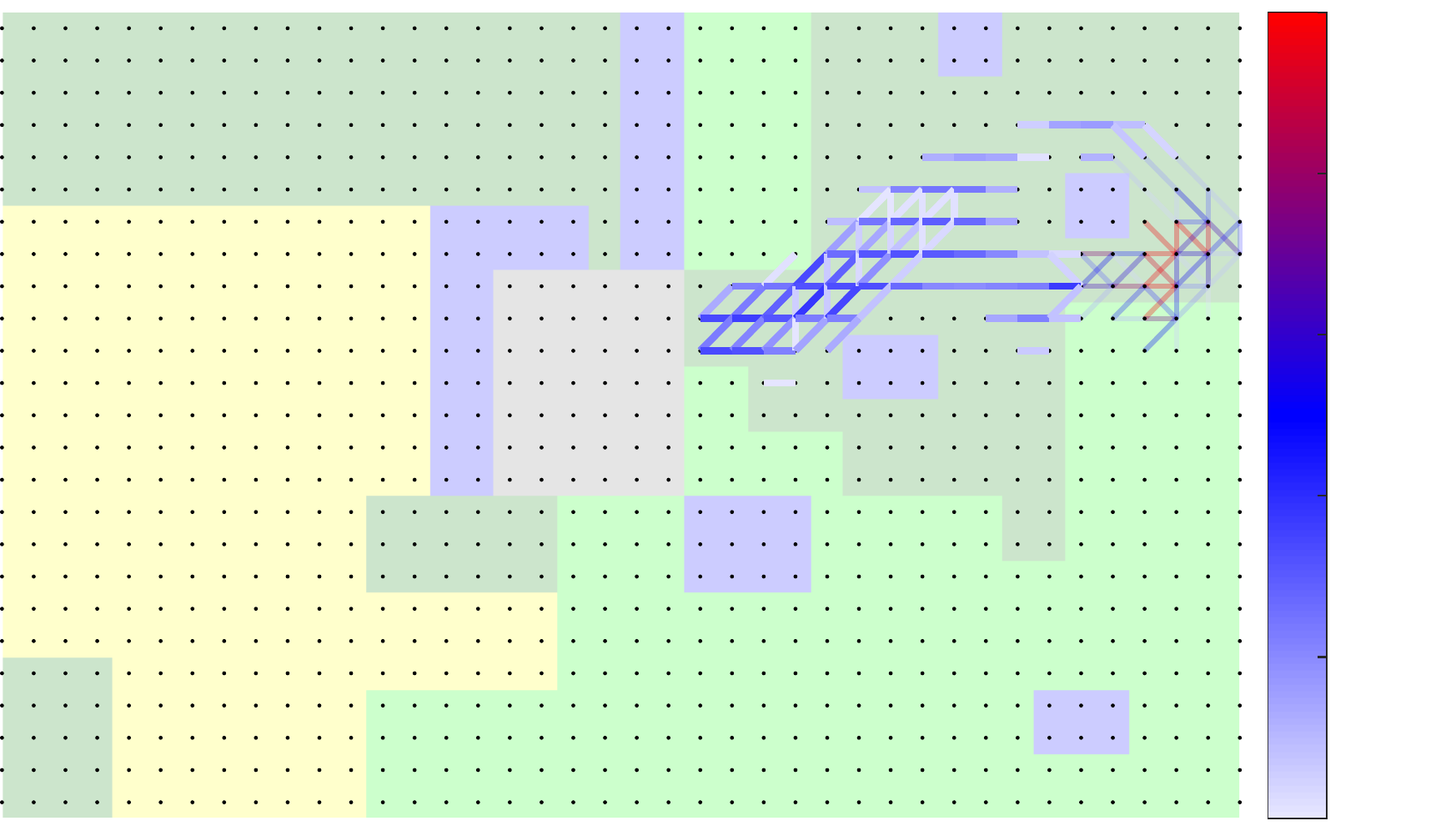' (pdf, eps, ps)
%%
%% To include the image in your LaTeX document, write
%%   \input{<filename>.pdf_tex}
%%  instead of
%%   \includegraphics{<filename>.pdf}
%% To scale the image, write
%%   \def\svgwidth{<desired width>}
%%   \input{<filename>.pdf_tex}
%%  instead of
%%   \includegraphics[width=<desired width>]{<filename>.pdf}
%%
%% Images with a different path to the parent latex file can
%% be accessed with the `import' package (which may need to be
%% installed) using
%%   \usepackage{import}
%% in the preamble, and then including the image with
%%   \import{<path to file>}{<filename>.pdf_tex}
%% Alternatively, one can specify
%%   \graphicspath{{<path to file>/}}
%% 
%% For more information, please see info/svg-inkscape on CTAN:
%%   http://tug.ctan.org/tex-archive/info/svg-inkscape
%%
\begingroup%
  \makeatletter%
  \providecommand\color[2][]{%
    \errmessage{(Inkscape) Color is used for the text in Inkscape, but the package 'color.sty' is not loaded}%
    \renewcommand\color[2][]{}%
  }%
  \providecommand\transparent[1]{%
    \errmessage{(Inkscape) Transparency is used (non-zero) for the text in Inkscape, but the package 'transparent.sty' is not loaded}%
    \renewcommand\transparent[1]{}%
  }%
  \providecommand\rotatebox[2]{#2}%
  \newcommand*\fsize{\dimexpr\f@size pt\relax}%
  \newcommand*\lineheight[1]{\fontsize{\fsize}{#1\fsize}\selectfont}%
  \ifx\svgwidth\undefined%
    \setlength{\unitlength}{502.61453247bp}%
    \ifx\svgscale\undefined%
      \relax%
    \else%
      \setlength{\unitlength}{\unitlength * \real{\svgscale}}%
    \fi%
  \else%
    \setlength{\unitlength}{\svgwidth}%
  \fi%
  \global\let\svgwidth\undefined%
  \global\let\svgscale\undefined%
  \makeatother%
  \begin{picture}(1,0.57360556)%
    \lineheight{1}%
    \setlength\tabcolsep{0pt}%
    \put(0,0){\includegraphics[width=\unitlength,page=1]{Ex2LdN10.pdf}}%
    \put(0.92195404,0.00040657){\makebox(0,0)[lt]{\lineheight{1.25}\smash{\begin{tabular}[t]{l}\footnotesize 0\end{tabular}}}}%
    \put(0.92195404,0.1111276){\makebox(0,0)[lt]{\lineheight{1.25}\smash{\begin{tabular}[t]{l}\footnotesize 0.2\end{tabular}}}}%
    \put(0.92195404,0.22184863){\makebox(0,0)[lt]{\lineheight{1.25}\smash{\begin{tabular}[t]{l}\footnotesize 0.4\end{tabular}}}}%
    \put(0.92195404,0.33256966){\makebox(0,0)[lt]{\lineheight{1.25}\smash{\begin{tabular}[t]{l}\footnotesize 0.6\end{tabular}}}}%
    \put(0.92195404,0.4432907){\makebox(0,0)[lt]{\lineheight{1.25}\smash{\begin{tabular}[t]{l}\footnotesize0.8\end{tabular}}}}%
    \put(0.92195404,0.55401173){\makebox(0,0)[lt]{\lineheight{1.25}\smash{\begin{tabular}[t]{l}\footnotesize1\end{tabular}}}}%
    \put(0.99999,0.11408757){\rotatebox{90}{\makebox(0,0)[lt]{\lineheight{1.25}\smash{\begin{tabular}[t]{l}Resource allocation\end{tabular}}}}}%
    \put(0,0){\includegraphics[width=\unitlength,page=2]{Ex2LdN10.pdf}}%
  \end{picture}%
\endgroup%
 
\caption{k=K}
      \label{fig:FNAd}
  \end{subfigure}
      \caption{Resource allocation for a 4-stage approach with precisely-known initial fire outbreak location (Fig. \ref{fig:P1LL}) with $\Gamma^k=10$. Resources shown are total resources up to time $k$ where new resources added have a thicker line width.}     
\label{fig:FGL}
\end{figure*}

\section{CONCLUSIONS}
In this paper we presented a multi-stage approach to minimize the risk of spreading processes over networks by use of sparse control. A convex exponential programming formulation was presented including sparsity inducing logarithmic resource models. With both epidemic and wildfire examples the use and potential of the proposed method was conveyed. Future work will include more realistic propagation models and time-varying networks. 

%\addtolength{\textheight}{-12cm}   % This command serves to balance the column lengths
%                                  % on the last page of the document manually. It shortens
%                                  % the textheight of the last page by a suitable amount.
%                                  % This command does not take effect until the next page
%                                  % so it should come on the page before the last. Make
%                                  % sure that you do not shorten the textheight too much.

%%%%%%%%%%%%%%%%%%%%%%%%%%%%%%%%%%%%%%%%%%%%%%%%%%%%%%%%%%%%%%%%%%%%%%%%%%%%%%%%

%%%%%%%%%%%%%%%%%%%%%%%%%%%%%%%%%%%%%%%%%%%%%%%%%%%%%%%%%%%%%%%%%%%%%%%%%%%%%%%%

%%%%%%%%%%%%%%%%%%%%%%%%%%%%%%%%%%%%%%%%%%%%%%%%%%%%%%%%%%%%%%%%%%%%%%%%%%%%%%%%
%\section*{APPENDIX}
%
%Appendixes should appear before the acknowledgment.
%
%\section*{ACKNOWLEDGMENT}
%
%The preferred spelling of the word �acknowledgment� in America is without an �e� after the �g�. Avoid the stilted expression, �One of us (R. B. G.) thanks . . .�  Instead, try �R. B. G. thanks�. Put sponsor acknowledgments in the unnumbered footnote on the first page.

%%%%%%%%%%%%%%%%%%%%%%%%%%%%%%%%%%%%%%%%%%%%%%%%%%%%%%%%%%%%%%%%%%%%%%%%%%%%%%%%

\bibliographystyle{IEEEtran}
\bibliography{ACFR2021}

\end{document}